\newtheorem{theorem}{Theorem}
\newtheorem{lemma}{Lemma}
\newtheorem{corollary}{Corollary}
\theoremstyle{definition}
\newtheorem{definition}{Definition}
\newcommand{\PP}{\textrm{P}\xspace}
\newcommand{\NP}{\textrm{NP}\xspace}
\newcommand{\coNP}{\textrm{coNP}\xspace}
\newcommand{\etal}{et~al.\@\xspace}
\newcommand{\Rnn}{\mathbb{R}_{+}}
\newcommand{\Znn}{\mathbb{Z}_{+}}
\newcommand{\OPT}{\mathrm{OPT}}
\newcommand{\ILP}{{\textit{ILP}_{k,\alpha}(G)}}
\newcommand{\ILPn}{{\textit{ILP}_{k,\alpha}(G_n)}}
\newcommand{\LP}{{\textit{LP}_{k,\alpha}(G,\{C_v\}_{v \in \Gamma})}}
\newcommand{\LPUU}{{\textit{LPU}_{k,\alpha}(G)}}
\newcommand{\y}[1]{y^{\scriptscriptstyle(#1)}}
\newcommand{\X}[1]{X^{\scriptscriptstyle(#1)}}
\newcommand{\llIf}[2]{{\let\par\relax\lIf{#1}{#2}}}
\newcommand{\llElse}[1]{{\let\par\relax\lElse{#1}}}
\newcommand{\failure}{\textsc{failure}\xspace}
\newcommand{\alg}{\textsc{alg}\xspace}
\DeclareMathOperator*{\argmax}{arg\,max}
\begin{document}

\title{Improved Approximation Algorithms for Capacitated Fault-Tolerant $k$-Center%
\thanks{Partially supported by CAPES, CNPq (grants 308523/2012-1,
477203/2012-4, and 456792/2014-7), FAPESP (grants \mbox{2013/03447-6}
and \mbox{2014/14209-1}), and MaCLinC.}}

\author{Cristina G.\@ Fernandes}
\author{Samuel P.\@ de Paula}
\affil{Department of Computer Science, University of S\~ao Paulo, Brazil\\
\texttt{\{cris,samuelp\}@ime.usp.br}}
\author{Lehilton L.\@ C.\@ Pedrosa}
\affil{Institute of Computing, University of Campinas, Brazil\\
\texttt{lehilton@ic.unicamp.br}}

\date{}

\maketitle

\begin{abstract}
In the $k$-center problem, given a metric space~$V$ and a positive integer~$k$,
one wants to select $k$ elements (centers) of~$V$ and an assignment from~$V$ to
centers, minimizing the maximum distance between an element of~$V$ and its
assigned center. One of the most general variants is the \emph{capacitated
$\alpha$-fault-tolerant $k$-center}, where centers have a limit on the number of
assigned elements, and, if $\alpha$ centers fail, there is a reassignment
from~$V$ to non-faulty centers.
In this paper, we present a new approach to tackle fault tolerance, by selecting
and pre-opening a set of backup centers, then solving the obtained residual
instance.
For the $\{0,L\}$-capacitated case, we give approximations with factor $6$ for
the basic problem, and $7$ for the so called \emph{conservative} variant, when
only clients whose centers failed may be reassigned. Our algorithms improve on
the previously best known factors of $9$ and $17$, respectively. Moreover, we
consider the case with general capacities. Assuming $\alpha$ is constant, our
method leads to the first approximations for this case. We also derive
approximations for the capacitated fault-tolerant $k$-supplier problem.
\end{abstract}

\section{Introduction}

The \emph{$k$-center} is the minimax problem in which, given a metric space $V$
and a positive integer $k$, we want to choose a set of $k$ centers such that the
maximum distance from an element of $V$ to its closest center is minimum. More
precisely, the goal is to select $S \subseteq V$ with $|S| = k$ that minimizes
\[
\max_{u \in V} \min_{v \in S} d(u,v),
\]
where $d(u,v)$ is the distance between $u$ and $v$. The decision version of the
$k$-center appears in Garey and Johnson's list of NP-complete problems,
identified by MS9~\cite{GareyJ79}. It is well known that $k$-center has a
$2$-approximation which is best possible unless $\PP =
\NP$~\cite{FederG88,Gonzalez85,HochbaumS85,HochbaumS86,HsuN79}. The elements of
set~$S$ are usually referred to as \emph{centers}, and the elements of~$V$ as
\emph{clients}.

In a typical application of $k$-center, set $V$ represents the nodes of a
network, and one may want to install $k$ routers so that the network latency is
minimized. Other applications have additional constraints, so variants of the
$k$-center have been considered as well. For example, the number of nodes that a
router may serve might be limited.
In the \emph{capacitated $k$-center}, in addition to the set of selected
centers, we also want to obtain an assignment from the set of clients to centers
such that at most a number $L_u$ of clients are assigned to each center $u$. The
value~$L_u$ is called the \emph{capacity} of $u$. The first approximation for
this version of the problem is due to Bar-Ilan~\etal~\cite{BarilanKP93}, who
gave a $10$-approximation for the particular case of uniform capacities, where
there is a number $L$ such that $L_u = L$ for every $u$ in $V$. This was
improved by Khuller and Sussmann~\cite{KhullerS00}, who obtained a
$6$-approximation, and also considered the \emph{soft} capacitated case, in
which multiple centers may be opened at the same location, obtaining a
$5$-approximation, both results for uniform capacities.

Despite the progress in the approximation algorithms for related problems, such
as the metric facility location problem, the first constant approximation for
the (non-uniformly) capacitated $k$-center was obtained only in 2012, by
Cygan~\etal~\cite{CyganHK12}. Differently from algorithms for the uniform case,
the algorithm of Cygan~\etal is based on the relaxation of a linear programming
(LP) formulation. Since the natural formulation for the $k$-center has unbounded
integrality gap, a preprocessing is used, what allows considering only instances
whose LP has bounded gap. They also presented an $11$-approximation for the soft
capacitated case.
Later, An~\etal~\cite{AnBCGMS15} presented a cleaner rounding algorithm, and
obtained an improved approximation with factor~$9$ (while the previous
approximation had a large constant factor, not explicitly calculated).
Cygan~\etal~\cite{CyganK14} also presented an algorithm for a variant of the
problem with outliers.
As for negative results, it has been shown that the capacitated $k$-center has
no approximation with factor better than $3$ unless $\PP =
\NP$~\cite{CyganHK12}.

Another natural variant of the $k$-center comprises the possibility that centers
may fail during operation. This was first discussed by Krumke~\cite{Krumke95},
who considered the version in which clients must be connected to a given minimum
number of centers. In the \emph{fault-tolerant $k$-center}, for a number
$\alpha$, we consider the possibility that any subset of centers of size at most
$\alpha$ may fail. The objective is to minimize the maximum distance from a
client to its $\alpha + 1$ nearest centers. For the variant in which selected
centers do not need to be served, Krumke~\cite{Krumke95} gave a
$4$-approximation, later improved to a (best possible) $2$-approximation by
Chaudhuri~\etal~\cite{ChaudhuriGR98}, and Khuller~\etal~\cite{KhullerPS00}. For
the standard version, in which a client must be served even if a center is
installed at the client's location, there is a $3$-approximation by
Khuller~\etal~\cite{KhullerPS00}, who also gave a $2$-approximation for the
particular case of $\alpha \le 2$.

Chechik and Peleg~\cite{ChechikP15} considered a common generalization of the
capacitated $k$-center and the fault-tolerant $k$-center, where centers have
limited capacity and may fail during operation.  They defined only the uniformly
capacitated version, presenting a $9$-approximation. Also, they considered the
case in which, after failures, only clients that were assigned to faulty centers
may be reassigned. For this variant, called the conservative fault-tolerant
$k$-center, a $17$-approximation was obtained for the uniformly capacitated
case. For the special case in which $\alpha < L$, the so called \emph{large
capacities} case, they obtained a $13$-approximation.

\subsection{Our contributions and techniques}

We consider the \emph{capacitated $\alpha$-fault-tolerant $k$-center} problem.
Formally, an instance for this problem consists of a metric space $V$ with
corresponding distance function $d : V\times V \rightarrow \Rnn$, non-negative
integers $k$ and $\alpha$, with $\alpha < k$, and a non-negative integer~$L_v$
for each~$v$ in~$V$. A solution is a subset $S$ of $V$ with ${|S| = k}$, such
that, for each $F \subseteq S$ with~$|F| \le \alpha$, there exists an
assignment~$\phi_F : V \rightarrow S \setminus F$ with $|\phi_F^{-1}(v)| \le
L_v$ for each $v$ in $S\setminus F$. For a given~$F$, we denote by $\phi^*_F$ an
assignment $\phi_F$ with minimum $\max_{u\in V} d(u, \phi_F(u))$.
The problem's objective is to find a solution that minimizes
\[
\max_{u \in V, F \subseteq V : |F| \le \alpha} d(u,\phi^*_F(u)).
\]

We also consider the \emph{capacitated conservative $\alpha$-fault-tolerant
$k$-center}. In this variant, in addition to the set $S$, a solution comprises
an initial assignment~$\phi_0$. We require that an assignment~$\phi_F$ for a
failure scenario~$F$ differs from~$\phi_0$ only for vertices assigned by
$\phi_0$ to centers in~$F$.
Precisely, given $F\subseteq S$ with ${|F| \leq \alpha}$, we say that an
assignment $\phi_F$ is \emph{conservative} (with respect to $\phi_0$) if
$\phi_F(u) = \phi_0(u)$ for every~$u \in V$ with~$\phi_0(u) \notin F$. A
solution for the problem is a pair $(S, \phi_0)$ such that, for each $F\subseteq
S$ with~${|F| \leq \alpha}$, there exists a conservative assignment~$\phi_F$.
The objective function is defined analogously.

Our major technical contribution is a new strategy to deal with the
fault-tolerant and capacitated problems. Namely, we solve the considered
problems in two phases. In the first phase, we identify clusters of vertices
where an optimal solution must install a minimum of $\alpha$ centers. For each
cluster, we carefully select~$\alpha$ of its vertices, and pre-open them as
centers. These $\alpha$ centers will have enough \emph{backup} capacity so that,
in the case of failure events, the unused capacity of all pre-opened centers
will be sufficient to obtain a reassignment for all clients. While the $\alpha$
guessed centers of a cluster may not correspond to centers in an optimal
solution, we carefully select elements that are near to centers of an optimal
solution, so that our choice leads to an approximate solution.
In the second phase, we are left with a residual instance, where part of a
solution is already known. Depending on the problem, obtaining the remaining
centers of a solution may be reduced to the non-fault-tolerant variant.
Otherwise, we can make stronger assumptions over the input and the solution, so
that the task of obtaining a fault-tolerant solution is simplified.

A good feature of the presented approach is that it can be used in combination
with different methods and algorithms, and can be applied to different versions
of the problem. Indeed, we obtain approximations for both the conservative and
non-conservative variants of the capacitated fault-tolerant $k$-center.
Moreover, each of the obtained approximations uses novel and specific techniques
that are of particular interest. For the conservative variant, we present
elegant combinatorial algorithms that reduce the problem to the
non-fault-tolerant case. For the non-conservative variant, our algorithms are
based on the rounding of a new LP formulation for the problem. Interestingly, we
use the set of pre-opened centers to obtain a partial solution for the LP
variables with integral values. We hope that other problems can benefit from
similar techniques.

\subsection{Obtained approximations and paper organization}

The conservative variant is considered in
Sections~\ref{sec:conservative-aprox0L} and~\ref{sec:conservative-nonuniform}.
In Section~\ref{sec:conservative-aprox0L}, we present a $7$-approximation for
the $\{0,L\}$-capacitated conservative $\alpha$-fault-tolerant $k$-center. This
is the subset of the problem where the capacities are either~$0$ or~$L$, for
some~$L$. Notice that this generalizes the uniformly capacitated case, when all
capacities are equal to~$L$.
This result improves on the previously known factors of $17$ and $13$ by Chechik
and Peleg~\cite{ChechikP15}, that apply to particular cases with uniform
capacities, and uniform large capacities, respectively.
In Section~\ref{sec:conservative-nonuniform}, we study the case of general
capacities, and present a $(9 + 6\alpha)$-approximation when $\alpha$ is
constant. To the best of our knowledge, this is the first approximation for the
problem with arbitrary capacities.

For the non-conservative variant, our algorithms are based on the rounding of a
new LP formulation, and are described in Sections~\ref{sec:noncons}
and~\ref{sec:aprox0L}.
First we consider the case of arbitrary capacities in~Section~\ref{sec:noncons}.
We present the LP formulation, and give a $10$-approximation when $\alpha$ is
constant. Once again, this is the first approximation for the problem with
arbitrary capacities.
In Section~\ref{sec:aprox0L}, the rounding algorithm is adapted for the
$\{0,L\}$-capacitated fault-tolerant $k$-center, for which we obtain a
$6$-approximation with $\alpha$ being part of the input. This factor matches the
best known factor for the problem without fault
tolerance~\cite{AnBCGMS15,KhullerS00}, and improves on the best previously known
algorithm for the fault-tolerant version, which achieves factor~$9$ for the
uniformly capacitated case~\cite{ChechikP15}.

In Section~\ref{sec-supplier}, we apply our technique to the $k$-supplier
problem,
and in Section~\ref{sec-complex} we show some complexity results for related
decision problems for in case that $\alpha$ is part of the input.
A summary of the results is given in Table~\ref{tab:summary}.

\begin{table}[h]
\renewcommand{\arraystretch}{1.1}
\centering
\begin{tabular}{|@{\ }c@{\ }|@{\ }c@{\ }|@{\ }c@{\ }|@{\ }c@{\ }|@{\ }c@{\ }|}
\hline
\textbf{Version} & \textbf{Capacities} & \textbf{Value of $\alpha$}
& \textbf{Previous} & \textbf{This paper} \\
\hline
conservative     & uniform   & given in the input & $17$~\cite{ChechikP15} & $7$         \\
conservative     & arbitrary & fixed              & --                     & $9+6\alpha$ \\
non-conservative & uniform   & given in the input & $9$~\cite{ChechikP15}  & $6$         \\
non-conservative & arbitrary & fixed              & --                     & $10$        \\\hline
\end{tabular}
\caption{\label{tab:summary} Summary of the obtained approximation factors for
the $k$-center problem.}
\end{table}

\section{Preliminaries}
\label{sec:prel}

Let $G = (V, E)$ be an undirected and unweighted graph. We denote by $d_G$ the
metric induced by $G$, that is, for~$u$ and $v$ in $V$, let $d_G(u,v)$ be the
length of a shortest path between $u$ and~$v$ in $G$.
For given nonempty sets $A$, $B \subseteq V$, we define $d_G(A,B) = \min_{a\in
A, b\in B} d_G(a,b)$. Also, for $a \in V$, we may write $d_G(a, B)$ instead of
$d_G(\{ a \},B)$.

For an integer~$\ell$, we let ${N^\ell_G(u) = \{v \in V : d_G(v,u) \le \ell\}}$.
For a subset~$U \subseteq V$, let $N^\ell_G(U) = \bigcup_{u \in U} N^\ell_G(u)$.
We may omit the superscript $\ell$ when $\ell = 1$, and the subscript~$G$ when
the graph is clear from the context. For a directed graph~$G$, we define
$d_G(u,v)$ as the length of a shortest directed path from $u$ to~$v$ in~$G$, and
define $N^\ell_G(u)$ similarly. We also define the (power) graph $G^\ell =
(V,E^\ell)$, where $\{u,v\}\in E^{\ell}$ if $v\in N^{\ell}(u)\setminus\{u\}$.

\subsection{Reduction to the unweighted case}

As it is standard for the $k$-center problem, we will use the bottleneck
method~\cite{HochbaumS86}, so that we can consider the case in which the metric
space is induced by an unweighted undirected graph.
Suppose we have an algorithm that, given an unweighted graph, either produces
a distance-$r$ solution for the unweighted problem, or a certificate that no
distance-$1$ solution exists. We may then use this algorithm to obtain an
$r$-approximation for the general metric case.

Let $V$ be a metric space associated with distance function ${c : V \times V
\rightarrow \Rnn}$. For a certain number~$\tau$ in $\Rnn$, we consider the
\emph{threshold graph} defined as~${G_{\le \tau} = (V,E_{\le \tau})}$,
where~${E_{\le \tau} = \{\{u,v\} : c(u,v) \le \tau\}}$. Next we obtain a
sequence of values of $c(u,v)$ for $(u,v)$ in $V^2$, in increasing order. For
each~$\tau$ in this ordering, we obtain~$G_{\le \tau}$, and use the algorithm
for the unweighted case; we stop when the algorithm fails to provide a negative
certificate, and return the obtained solution. Notice that there must be a
distance-$1$ solution for~$G_{\le \OPT}$, where $\OPT$ denotes the optimum value
for the problem. Since $\OPT$ is in the considered ordering for~$\tau$, the
algorithm always stops, and returns a solution for some~$\tau \le \OPT$, so we
obtain a solution for the original problem of cost at most $r \cdot \tau \le r
\cdot \OPT$. Hence, from now on, we assume that an unweighted graph $G = (V, E)$ is
given, and that the goal is to either obtain a certificate that no distance-$1$
solution exists, or return a \mbox{distance-$r$} solution for some constant~$r$.

\subsection{Preprocessing and reduction to the connected case}
\label{sec:preprocessing}

We also may assume without loss of generality that $G$ is
connected~\cite{ChechikP15,CyganHK12,KhullerS00}. If this is not the case, we
may proceed as follows.
Suppose there is an algorithm that, given a connected graph~$\tilde{G}$, and an
integer $\tilde{k}$, produces a distance-$r$ solution with~$\tilde{k}$ vertices,
or gives a certificate that no distance-$1$ solution with $\tilde{k}$ vertices
exists. Now, consider a given arbitrary unweighted graph $G$, and a given
integer~$k$.
We decompose $G$ into its connected components, say $G_1, \dots, G_t$. For each
connected component~$G_i$, with $1 \le i \le t$, we run the algorithm for each
$\tilde{k} = \alpha+1, \dots, k$, and find the minimum value~$k_i$, if any, for
which the algorithm obtains a distance-$r$ solution. As the failure set is
arbitrary, in the worst case all faulty centers might be in the same component.
If, for some~$G_i$, there is no distance-$1$ solution with~$k$ centers
or if $k_1 + \dots + k_t > k$, then clearly there is no distance-$1$ solution
for~$G$ with~$k$ centers; otherwise, conjoining the solutions obtained for each
component leads to a distance-$r$ solution for $G$ with no more than $k$
centers, and this solution is tolerant to the failure of~$\alpha$ centers. From
now on, we will assume that $G$ is connected.

\section{$\{0,L\}$-Capacitated conservative fault-tolerant $k$-center}
\label{sec:conservative-aprox0L}

After the occurrence of a failure, a distance-$1$ conservative solution has to
reassign each unserved client to an open center in its vicinity with available
capacity. This requires some kind of ``local available center capacity'', to be
used as backup. The next definition describes a set of vertices that are nice
candidates to be open as backup centers. This set can be partitioned into
clusters of at most $\alpha$ vertices, with the clusters sufficiently apart from
each other.  The idea is that failures in the vicinity of one of these clusters
do not affect centers in the other clusters.  More precisely, the vicinities of
different clusters do not intersect, therefore, in a distance-$1$ conservative
solution, any client that is assigned to a center in a certain cluster cannot be
reassigned to a center in the vicinity of any of the other clusters.

\begin{definition}\label{alpha-independent}
Consider a graph $G=(V,E)$ and non-negative integers~$\alpha$ and $\ell$. A set
$W$ of vertices of $G$ is~\emph{$(\alpha,\ell)$-independent} if it can be
partitioned into sets $C_1, \dots, C_t$, such that $|C_i| \le \alpha$ for $1 \le
i \le t$, and $d(C_i, C_j) > \ell$ for $1 \le i < j \le t$.
\end{definition}

In what follows, we denote by~$(G,k,L,\alpha)$ an instance of the
capacitated conservative $\alpha$-fault-tolerant $k$-center as obtained by
Section~\ref{sec:prel}. We say that~$(G,k,L,\alpha)$ is feasible if there exists
a distance-$1$ solution for it.

\begin{lemma}\label{lem:alpha-independent}
Let~$(G,k,L,\alpha)$ be a feasible instance for the capacitated conservative
$\alpha$-fault-tolerant $k$-center, and let $(S^*, \phi^*_0)$ be a corresponding
\mbox{distance-$1$} solution. If \mbox{$W \subseteq S^*$} is an
$(\alpha,4)$-independent set in $G$, then ${(G,k-|W|,L')}$ is feasible for the
capacitated $k$-center, where \mbox{$L'_u = 0$} for~$u$ in $W$, and \mbox{$L'_u
= L_u$} otherwise.
\end{lemma}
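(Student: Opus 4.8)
The plan is to exhibit an explicit distance-$1$ solution for the residual capacitated $k$-center instance. Let $C_1, \dots, C_t$ be the partition of $W$ witnessing that it is $(\alpha,4)$-independent, so $|C_i| \le \alpha$ and $d(C_i, C_j) > 4$ for $i \ne j$. I would take as center set $S' = S^* \setminus W$, which has exactly $k - |W|$ elements since $W \subseteq S^*$. Taking $F = \emptyset$ in the definition of a conservative solution shows that $\phi^*_0$ itself is a distance-$1$, capacity-respecting assignment into $S^*$. For each cluster $C_i$, since $|C_i| \le \alpha$, feasibility provides a conservative assignment $\phi_{C_i}$ for the failure scenario $F = C_i$: it maps into $S^* \setminus C_i$, is distance-$1$, respects the capacities $L_v$ on $S^* \setminus C_i$, and agrees with $\phi^*_0$ on every client not sent by $\phi^*_0$ into $C_i$.

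I would then define the residual assignment $\phi'$ by keeping $\phi'(u) = \phi^*_0(u)$ whenever $\phi^*_0(u) \notin W$, and setting $\phi'(u) = \phi_{C_i}(u)$ whenever $\phi^*_0(u) \in C_i$. The first thing to check is that $\phi'$ never uses a center of $W$, so that it is indeed an assignment into $S'$. For a reassigned client $u$ with $\phi^*_0(u) \in C_i$, both $d(u, \phi^*_0(u)) \le 1$ and $d(u, \phi_{C_i}(u)) \le 1$ hold, hence $\phi_{C_i}(u)$ lies within distance $2$ of $C_i$; the separation $d(C_i, C_j) > 4$ then forces $\phi_{C_i}(u) \notin C_j$ for every $j \ne i$, and $\phi_{C_i}(u) \notin C_i$ holds since $\phi_{C_i}$ avoids $C_i$, so the target is in $S^* \setminus W$. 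Distances are at most $1$ throughout by construction.

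The crux is the capacity bound $|\phi'^{-1}(v)| \le L_v$ for each $v \in S' = S^* \setminus W$. Here I would use that the clusters are far apart: a center $v$ within distance $2$ of two distinct clusters $C_i, C_j$ would give $d(C_i, C_j) \le 4$, a contradiction, so $v$ can receive reassigned clients from at most one cluster $C_{i(v)}$. Fixing $v$ and this cluster $C_i = C_{i(v)}$, every client counted by $\phi'^{-1}(v)$ is either an unchanged client with $\phi^*_0(u) = v$ or a client reassigned from $C_i$ with $\phi_{C_i}(u) = v$; since $v \notin C_i$, the conservative assignment $\phi_{C_i}$ sends exactly these same clients to $v$, so $\phi'^{-1}(v) = \phi_{C_i}^{-1}(v)$ and the capacity bound is inherited from $\phi_{C_i}$. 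For a center $v$ near no cluster, $\phi'^{-1}(v) = {\phi^*_0}^{-1}(v)$, and the bound comes from $\phi^*_0$. Since $L'_u = 0$ only for $u \in W$ and no client is assigned to $W$, all the $L'$-capacities are respected, and $(S', \phi')$ certifies feasibility of $(G, k-|W|, L')$.

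I expect the main obstacle to be precisely this capacity accounting: one must verify that combining one reassignment per cluster does not overload any surviving center, and the identity $\phi'^{-1}(v) = \phi_{C_i}^{-1}(v)$ is what makes this work. The geometric separation condition $d(C_i, C_j) > 4$ is exactly calibrated so that the reassignment ``footprints'' of different clusters, which reach distance $2$ from each cluster, are pairwise disjoint and avoid $W$; a weaker separation would break this disjointness and the argument would collapse.
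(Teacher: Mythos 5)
Your proposal is correct and follows essentially the same route as the paper's proof: both construct the hybrid assignment that applies the conservative reassignment $\phi_{C_i}$ to clients initially served by $C_i$ and leaves all other clients with $\phi^*_0$, and both use the fact that the $(\alpha,4)$-independence makes the sets $N^2(C_i)$ pairwise disjoint, so each surviving center's preimage coincides with that of a single $\phi_{C_i}$ (or of $\phi^*_0$) and the capacity bounds are inherited. The only cosmetic difference is that you explicitly shrink the center set to $S^*\setminus W$, while the paper keeps $S^*$ and lets the zero capacities on $W$ do the work.
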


\begin{proof}
  Since $W$ is~$(\alpha,4)$-independent, there must be a partition
  $C_1,\ldots,C_t$ of $W$ such that $d(C_i, C_j) > 4$ for any pair $i, j$,
  with $1 \le i < j \le t$. Also, each part~$C_i$ has at most~$\alpha$
  vertices, and thus there exists a conservative assignment $\phi^*_{C_i}$
  with \mbox{$(\phi^*_{C_i})^{-1}(C_i) = \emptyset$}. Therefore,
  $\phi^*_{C_i}$ is a distance-$1$ solution for the~$(G,k-|C_i|,L^i)$ instance
  of the capacitated $k$-center problem, where \mbox{$L^i_u = 0$} for~$u$ in
  $C_i$, and~\mbox{$L^i_u = L_u$} otherwise. Moreover, as $\phi^*_0$ is
  conservative, $\phi^*_{C_i}$ differs from $\phi^*_0$ only
  in~$(\phi^*_0)^{-1}(C_i)$.
  So, if a center $u$ in $S^*$ is such that $(\phi^*_0)^{-1}(u) \neq
  (\phi^*_{C_i})^{-1}(u)$, then $u \in N^2(C_i)$. As $W$
  is~$(\alpha,4)$-independent, $N^2(C_i) \cap N^2(C_j) = \emptyset$ for every
  $j \in [t] \setminus \{i\}$. Let $\psi$ be an assignment such that, for each
  client $v$,
  \begin{align*}
    \psi(v) = \begin{cases}
                  \phi^*_{C_i}(v)
                    &\mbox{ $\phi^*_0(v) \in C_i$ for some $i$ in $[t]$,} \\
                  \phi^*_0(v)
                    &\mbox{ otherwise.}
              \end{cases}
  \end{align*}
  Therefore, set $\psi^{-1}(u)$ is empty if $u \in W$; is
  $(\phi^*_{C_i})^{-1}(u)$ if there exists~$i \in [t]$ such that $u \in
  N^2(C_i) \setminus C_i$; and is $(\phi^*_0)^{-1}(u)$ otherwise. This means
  that, for~$L'$ as in the statement of the lemma, $|\psi^{-1}(u)| \leq L'_u$
  for every $u$, and so~$(S^*, \psi)$ is a solution for the $(G,k-|W|,L')$
  instance of the capacitated $k$-center problem.
\end{proof}

A set of vertices $A \subseteq V$ is \emph{$7$-independent} in~$G$ if every pair
of vertices in~$A$ is at distance at least~$7$ in~$G$. This definition was also
used by Chechik and Peleg~\cite{ChechikP15} and, as we will show, such a set is
useful to obtain an~$(\alpha,4)$-independent set in~$G$.

\begin{lemma}\label{lem:7independent}
Let $A$ be a $7$-independent set in $G$, for each~$a$ in $A$, let~$B(a)$ be any
set of $\alpha$ vertices in~$N(a)$, and let $B = \cup_{a \in A} B(a)$.
If~$(G,k,L,\alpha)$ is feasible for the capacitated conservative
$\alpha$-fault-tolerant $k$-center, then $(G,k - |B|,L')$ is feasible for the
capacitated $k$-center, where \mbox{$L'_u = 0$} for~$u$ in $B$, and \mbox{$L'_u
= L_u$} otherwise.
\end{lemma}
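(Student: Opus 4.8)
The plan is to reduce Lemma~\ref{lem:7independent} to the already-established Lemma~\ref{lem:alpha-independent}. That earlier lemma says: if $W \subseteq S^*$ is an $(\alpha,4)$-independent set contained in some optimal solution, then removing $W$'s capacity yields a feasible capacitated $k$-center instance. So the entire task here is to produce, from the $7$-independent set $A$ and its chosen neighborhoods $B(a)$, an $(\alpha,4)$-independent set \emph{that is contained in a distance-$1$ solution}, and then verify that the resulting capacity-removal instance matches the one in the statement.

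First I would fix a distance-$1$ solution $(S^*,\phi^*_0)$, which exists by feasibility. The subtlety is that the vertices of $B(a) \subseteq N(a)$ need not lie in $S^*$, so $B$ itself is not directly a subset of an optimal solution and Lemma~\ref{lem:alpha-independent} cannot be applied to $B$ verbatim. The key structural claim I expect to need is that, for each $a \in A$, the ball $N(a)$ must contain at least $\alpha+1$ centers of $S^*$ (because a distance-$1$ fault-tolerant solution must serve $a$ and its neighbors even after any $\alpha$ of their nearby centers fail, so $a$ needs $\alpha+1$ centers within distance $1$). This lets me replace the arbitrary set $B(a)$ by a set $C_a$ of $\alpha$ centers of $S^*$ drawn from $N(a)$, which is contained in $S^*$ and still has size $\alpha$.

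Next I would set $W = \bigcup_{a \in A} C_a$ and check it is $(\alpha,4)$-independent with the partition $\{C_a\}_{a \in A}$: each part has size $\alpha$, and for distinct $a,a' \in A$, since $d(a,a') \ge 7$ and $C_a \subseteq N(a)$, $C_{a'} \subseteq N(a')$, the triangle inequality gives $d(C_a,C_{a'}) \ge 7 - 1 - 1 = 5 > 4$. Thus Lemma~\ref{lem:alpha-independent} applies to $W$, yielding that $(G,k-|W|,L'')$ is feasible for the capacitated $k$-center, where $L''$ zeroes out exactly the vertices of $W$. The final step is to reconcile this with the statement, whose forbidden set is $B$, not $W$. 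The cleanest way is to phrase the conclusion in terms of removing $\alpha$ units of capacity per cluster around each $a$; since $|C_a| = |B(a)| = \alpha$ and the clusters are far apart, the feasible instance produced for $W$ can be transported to one where the zeroed vertices are $B$ instead (the two sets play interchangeable roles as ``backup slots'' near $a$).

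The main obstacle I anticipate is exactly this last reconciliation between $B$ and $W$: the lemma as stated opens $B(a)$, which are \emph{not} guaranteed to be in $S^*$, whereas Lemma~\ref{lem:alpha-independent} insists its independent set sit inside the optimal solution. I would resolve it by arguing that opening the near-optimal centers $C_a$ and opening the chosen $B(a)$ are equivalent for feasibility purposes, because any client that an optimal reassignment routes to a center in $C_a$ can instead be routed, at only a small additional distance, to a vertex in $B(a)$ via the common anchor $a$; hence the freed capacity of $B$ suffices wherever the freed capacity of $W$ did. Establishing this capacity-transfer argument cleanly, with the right distance bookkeeping through the anchors $a \in A$, is where the real work lies.
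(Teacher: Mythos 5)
Your reduction to Lemma~\ref{lem:alpha-independent} has the right skeleton, and your verification that $W=\bigcup_{a\in A} C_a$ is $(\alpha,4)$-independent matches the paper's. But the final reconciliation between $W$ and $B$ --- which you correctly flag as ``where the real work lies'' --- is where the proposal breaks down, and the repair you sketch cannot work. ``Feasible'' here means the existence of a \emph{distance-$1$} solution, so you may not reroute clients from $C_a$ to $B(a)$ ``at only a small additional distance'': any such transfer through the anchor $a$ destroys exactly the property you are trying to establish.

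The paper avoids rerouting entirely by two devices you omit. First, $W(a)$ is not an arbitrary set of $\alpha$ centers of $S^*\cap N(a)$: it is chosen to \emph{contain} $S^*\cap B(a)$, padded with $\alpha-|S^*\cap B(a)|$ further centers of $S^*\cap N(a)$. Second, before invoking Lemma~\ref{lem:alpha-independent}, the capacities of all vertices \emph{outside} $S^*$ are set to zero (this preserves feasibility, since no assignment in the optimal solution uses them). After these two steps, the instance produced by Lemma~\ref{lem:alpha-independent} has a capacity function $L'''$ with $L'''_u=0$ for every $u\in B$ --- each $u\in B$ is either in $S^*$, hence in $W$ by the first device, or outside $S^*$, hence zeroed by the second --- and $L'''_u\le L_u=L'_u$ elsewhere. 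Since $|B|=|W|$ and $L'\ge L'''$ pointwise, the very same distance-$1$ solution witnesses feasibility of $(G,k-|B|,L')$; no client is moved at all. With your arbitrary choice of $C_a$, a vertex $u\in B\cap S^*$ left outside $W$ could retain positive capacity in the instance Lemma~\ref{lem:alpha-independent} certifies feasible while having $L'_u=0$, so the monotonicity step fails, and no short-distance rerouting can repair it within the distance-$1$ regime.
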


\begin{proof}
  Let~$(S^*,\phi_0^*)$ be a solution for~$(G,k,L,\alpha)$. For each $a \in A$,
  there must be at least $\alpha$ centers in $S^* \cap N(a)$. Let $W(a)$ be
  the union of $S^* \cap B(a)$ and other $\alpha - |S^* \cap B(a)|$ centers in
  $S^* \cap N(a)$. Let $W = \cup_{a \in A} W(a)$. Since $A$ is
  $7$-independent, $N^3(a)$ and $N^3(b)$ are disjoint for any two $a$ and $b$
  in $A$, and so~${N^2(W(a)) \cap N^2(W(b)) = \emptyset}$. Thus, $W$ is
  $(\alpha,4)$-independent.

  Now let $L''$ be such that $L''_u = 0$ if ${u \notin S^*}$, and $L''_u =
  L_u$ otherwise. Observe that the instance ${(G,k, L'', \alpha)}$ is feasible
  (as we only set to zero the capacities of non-centers). By
  Lemma~\ref{lem:alpha-independent}, the instance ${(G,k - |W|,L''')}$ is
  feasible, where ${L'''_u = 0}$ if ${u \in W}$, and ${L'''_u = L''_u}$ otherwise.
  Notice that $L'_u \ge L'''_u$ for every $u$, and $|B| = |W|$. Therefore,
  since ${(G,k - |W|,L''')}$ is feasible, so is ${(G,k - |B|,L')}$.
\end{proof}

Now we present a $7$-approximation for the $\{0,L\}$-capacitated conservative
$\alpha$-fault-tolerant $k$-center. For this case, rather than using a capacity
function, it is convenient to consider the subset of vertices with capacity~$L$,
that is denoted by~$V^L$. We denote by $(G, k, V^L, \alpha)$  and by $(G, k,
V^L)$ instances of the fault-tolerant and non-fault-tolerant versions.
The steps are detailed in Algorithm~\ref{alg:conservative-0L}, where \alg
denotes an approximation algorithm for the $\{0,L\}$-capacitated $k$-center.

\begin{algorithm}[H]
    \caption{$\{0,L\}$-capacitated conservative $\alpha$-fault-tolerant $k$-center.}
    \SetKwInOut{Input}{Input}
    \SetAlgoNoLine
    \DontPrintSemicolon
    \label{alg:conservative-0L}

    \BlankLine
    \Input{connected graph $G$, $k$, $V^L$, and $\alpha$}
    \BlankLine

    $A \gets$ a maximal $7$-independent vertex set in $G$\;
    \ForEach{$a \in A$}
    {
        $B(a) \gets \alpha$ vertices chosen arbitrarily in~$N(a) \cap V^L$\;
    }
    $B \gets \cup_{a\in A} B(a)$ \;
    \eIf{$\alg(G,k-|B|,V^L \setminus B)$ returns \failure}
    {
        \Return \failure\label{lin:failure}
    }
    {
        Let $(S,\phi)$ be the solution returned by $\alg(G,k-|B|,V^L \setminus B)$ \;
        \Return $(S \cup B,\phi)$\label{alg:returnsolution} \;
    }
\end{algorithm}

\begin{theorem}\label{thm:conservative0L}
If \alg is a $\beta$-approximation for the $\{0,L\}$-capacitated $k$-center,
then Algorithm~\ref{alg:conservative-0L} is a $\max\{7,\beta\}$-approximation
for the $\{0,L\}$-capacitated conservative $\alpha$-fault-tolerant $k$-center.
\end{theorem}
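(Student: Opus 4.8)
The plan is to establish two things: (1) that the algorithm is correct — i.e., it returns \failure precisely when the instance is infeasible — and (2) that whenever it returns a solution, that solution is a valid distance-$r$ solution for the conservative fault-tolerant problem with $r = \max\{7,\beta\}$, and uses at most $k$ centers.

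I'll check the thinking now about the structure.

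The theorem is clean. Let me sketch the proof.

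First, the correctness of the \failure branch. We need: if the instance $(G,k,V^L,\alpha)$ is feasible for the conservative fault-tolerant problem, then $\alg(G,k-|B|,V^L\setminus B)$ does NOT return failure. This follows from Lemma~\ref{lem:7independent}: if $(G,k,V^L,\alpha)$ is feasible, then $(G,k-|B|,L')$ is feasible for the (non-fault-tolerant) capacitated $k$-center, where $L'$ zeroes out capacities on $B$. Here $B(a)$ are chosen within $N(a)\cap V^L$, which matches the lemma's setup ($B(a)\subseteq N(a)$, and capacities zeroed on $B$). Since $\alg$ is a $\beta$-approximation, when given a feasible instance it returns a distance-$\beta$ solution, not failure. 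So the algorithm returns failure only when the instance is infeasible — giving a valid negative certificate.

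Second, when the algorithm returns $(S\cup B,\phi)$, we must show this is a valid distance-$r$ conservative fault-tolerant solution. The center count: $|S\cup B| = |S| + |B| \le (k-|B|) + |B| = k$ (need $S$ and $B$ disjoint — $S\subseteq V^L\setminus B$ so disjoint). The initial assignment $\phi_0 = \phi$ assigns clients only to $S$ (backup centers $B$ start unused), and $\phi$ is a distance-$\beta$ assignment respecting capacities on $V^L\setminus B$.

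Third — and this is the main obstacle — we must exhibit, for every failure set $F\subseteq S\cup B$ with $|F|\le\alpha$, a conservative reassignment $\phi_F$ at distance $\le 7$. This is where the $(\alpha,4)$-independent / $7$-independent structure pays off. The backup centers $B=\cup_{a\in A}B(a)$ provide exactly $\alpha$ units of backup capacity near each $a\in A$, and because $A$ is $7$-independent, the vicinities don't overlap, so a failure near one cluster can be absorbed locally. The argument should be: each failed center $u\in F$ had its clients assigned by $\phi$; we reassign those clients to nearby backup centers in some $B(a)$. We need that every center of $S$ lies close enough to some $a\in A$ (by maximality of $A$, every vertex is within distance $6$ of $A$) and that the backup capacity available at $B(a)$ suffices to absorb all failures falling near $a$. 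The $7$-independence guarantees a single failure's reassignment demand maps to a unique cluster, and since each cluster has $\alpha$ backup slots and at most $\alpha$ centers can fail total, the capacity accounting works out. Bounding the reassignment distance by $7$ requires chasing the distances: a client is at distance $\le\beta$ from its failed center, the failed center is near some $a$, and the backup center is in $N(a)$.

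**The hard part will be** the distance and capacity bookkeeping for the reassignment $\phi_F$: verifying that the $\alpha$ backup slots per cluster are enough (this needs that failures near a single $a$ number at most $\alpha$, which uses $|F|\le\alpha$ and non-overlapping vicinities), and that the reassignment distance never exceeds $7$. I would set this up by first arguing each $a\in A$ has a well-defined "responsibility region," showing failed centers route to backups within their own region, and then summing the distance along the client–failed-center–$a$–backup path. I expect the factor $7$ to emerge tightly from this chain, while the factor $\beta$ governs only the initial assignment $\phi_0$; the final bound $\max\{7,\beta\}$ then combines the two regimes.
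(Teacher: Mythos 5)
Your overall structure (failure branch via Lemma~\ref{lem:7independent}, center count, then a per-failure-scenario conservative reassignment) matches the paper's proof, but your plan for the key step --- the distance bound of $7$ on the reassignment --- is wrong as written. You propose to bound the reassignment distance by chasing the path client $\to$ failed center $\to a \to$ backup, i.e.\ ``a client is at distance $\le\beta$ from its failed center, the failed center is near some $a$, and the backup center is in $N(a)$.'' That chain gives $\beta + 6 + 1 = \beta+7$, not $7$, so it would only prove a $(\beta+7)$-approximation. The paper's argument instead applies the maximality of the $7$-independent set $A$ to the \emph{client} directly: every vertex of $G$, in particular every client $u$ of a failed center, is within distance $6$ of some $a\in A$ (else $a$ could be added to $A$), and the backup it is sent to lies in $N(a)$, so $d(u,\text{backup})\le 7$ with no dependence on $\beta$. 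The failed center plays no role in the distance chain.

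A second, smaller gap is the capacity accounting. You assert that ``each cluster has $\alpha$ backup slots and at most $\alpha$ centers can fail total, so the capacity accounting works out,'' and you propose partitioning failures into per-cluster ``responsibility regions.'' The delicate point you do not address is that some of the failed centers may themselves be backup centers of the very cluster $B(a)$ you want to use, reducing its available capacity. The paper's resolution is a clean global inequality: since $\phi$ assigns nothing to $B$, the total reassignment demand is at most $L|F\setminus B|$, while the surviving capacity of any single cluster is $L|B(a)\setminus F| = L|F\setminus B(a)| \ge L|F\setminus B|$ (using $|B(a)|=|F|=\alpha$). Thus \emph{any one} cluster could absorb the entire demand, so no per-region bookkeeping is needed and clients may be routed purely by their own proximity to $A$. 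With these two corrections your outline becomes the paper's proof; without them the factor $7$ does not follow.
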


\begin{proof}
  Consider an instance $(G, k, V^L, \alpha)$ of the $\{0,L\}$-capacitated
  conservative $\alpha$-fault-tolerant $k$-center problem, with $G=(V,E)$.
  Let~$A$, $B(a)$ for $a$ in $A$, and~$B$ be as defined in
  Algorithm~\ref{alg:conservative-0L} with $(G, k, V^L, \alpha)$ as input.
  Assume that $(G, k, V^L, \alpha)$ is feasible. Since $A$ is $7$-independent,
  by Lemma~\ref{lem:7independent}, the instance $(G, k-|B|, V^L \setminus B)$,
  where we set to zero the capacities of all vertices in~$B$, is also feasible
  for the $\{0,L\}$-capacitated $k$-center problem. This means that, if
  Algorithm~\ref{alg:conservative-0L} executes Line~\ref{lin:failure}, then
  the given instance is indeed infeasible. On the other hand, if \textsc{alg}
  returns a solution~$(S,\phi)$, then, since $|S| \leq k - |B|$, the size of
  $S \cup B$ is at most~$k$, and $\phi$ is a valid initial center assignment.
  Moreover, $\phi$ is such that: (1)~each vertex~$u$ is at distance at
  most~$\beta$ from~$\phi(u)$; and (2)~no vertex is assigned to~$B$.

  Let $F \subseteq S \cup B$ with $|F|=\alpha$ be a failure scenario. We
  describe a conservative center reassignment for~$(S \cup B, \phi)$. We only
  need to reassign vertices initially assigned to centers in $F \setminus B$
  (as no vertex was assigned to a vertex in~$B$). Thus, at most $ L |F
  \setminus B|$ vertices need to be reassigned. For each such vertex~$u$, we
  can choose $a \in A$ at distance at most~$6$ from~$u$ (as~$A$ is maximal),
  and let $\tilde\phi(u) = a$.
  Then, for each $a \in A$, and for each $u$ with $\tilde\phi(u) = a$,
  reassign $u$ to some non-full center of $B(a) \setminus F$. Notice
  that~$B(a)\setminus F$ can absorb all reassigned vertices. Indeed, the
  available capacity of $B(a)\setminus F$ before the failure event is $L
  |B(a)\setminus F| = L|F \setminus B(a)| \ge L|F \setminus B|$, where we used
  $|B(a)| = |F| = \alpha$.
  Since for a reassigned  vertex $u$, $d(u, \tilde\phi(u)) \le 6$, and $u$ is
  reassigned to some center $ v \in N(\tilde\phi(u))$, the distance between
  $u$ and $v$ is at most~7. Also, if a vertex $u$ was not reassigned, then the
  distance to its center is at most $\beta$.
\end{proof}

Now, using the $6$-approximation for the $\{0,L\}$-capacitated $k$-center by
An~\etal~\cite{AnBCGMS15}, we obtain the following.

\begin{corollary}\label{cor:conservative0L}
Algorithm~\ref{alg:conservative-0L} using the algorithm by
An~\etal~\cite{AnBCGMS15} for the $\{0,L\}$-capacitated $k$-center is a
$7$-approximation for the $\{0,L\}$-capacitated conservative
$\alpha$-fault-tolerant $k$-center.
\end{corollary}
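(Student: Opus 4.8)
The plan is to apply Theorem~\ref{thm:conservative0L} directly, instantiating \alg with the algorithm of An~\etal~\cite{AnBCGMS15}. First I would record that this algorithm is a $6$-approximation for the $\{0,L\}$-capacitated $k$-center, and check that it fits the interface that Algorithm~\ref{alg:conservative-0L} expects. In the bottleneck framework of Section~\ref{sec:prel}, being a $6$-approximation means that, given an unweighted connected graph, it either outputs a distance-$6$ solution or returns \failure, certifying that no distance-$1$ solution exists. This is precisely the black box that Algorithm~\ref{alg:conservative-0L} invokes on the residual instance $(G, k-|B|, V^L \setminus B)$, so no adaptation is needed; the $\{0,L\}$-capacitated $k$-center solved by An~\etal\ coincides with that non-fault-tolerant subproblem by construction.

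With $\beta = 6$ in hand, Theorem~\ref{thm:conservative0L} immediately guarantees that Algorithm~\ref{alg:conservative-0L} is a $\max\{7, 6\}$-approximation, and since $\max\{7,6\} = 7$, the claimed factor follows. There is essentially no obstacle here, as the corollary is a pure specialization of the theorem: the fault-tolerance overhead contributes the fixed factor $7$ (from reassigning to backup centers $B(a)$ within distance $7$), while the residual solver contributes $\beta$, so the two factors are simply reconciled by taking their maximum. The only point worth a sentence of verification is that the hypothesis of Theorem~\ref{thm:conservative0L} is met, namely that the chosen \alg is genuinely a $\beta$-approximation for the $\{0,L\}$-capacitated $k$-center with $\beta = 6$, which is exactly the guarantee stated in~\cite{AnBCGMS15}.
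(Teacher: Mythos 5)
Your proposal is correct and matches the paper exactly: the corollary is obtained by instantiating Theorem~\ref{thm:conservative0L} with the $6$-approximation of An~\etal\ as \alg, giving $\max\{7,6\}=7$. The paper treats this as an immediate consequence with no further argument, and your additional check that the An~\etal\ algorithm fits the bottleneck interface is a reasonable (if unstated in the paper) sanity check.
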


\section{Capacitated conservative fault-tolerant $k$-center}
\label{sec:conservative-nonuniform}

In this section, we consider the capacitated conservative
$\alpha$-fault-tolerant $k$-center. Recall that this is the case in which
capacities may be arbitrary. An instance for this problem is denoted by~${(G, k,
L, \alpha)}$ for some $G=(V,E)$ and \mbox{$L : V \rightarrow \Znn$}. Under the
assumption that~$\alpha$ is bounded by a constant, we present the first
approximation for the problem.

In the~$\{0,L\}$-capacitated case, each vertex assigned to a faulty center could
be reassigned to a non-faulty center in $B(a)$, for an arbitrary nearby element
$a$ of a $7$-independent set~$A$. Each $B(a)$ could absorb \emph{all} reassigned
vertices.
With arbitrary capacities, the set $B$ of pre-opened centers must be obtained
much more carefully, as the capacities of non-zero-capacitated vertices are not
necessarily all the same.
Once the set $B$ of backup centers is selected, one needs to ensure that the
residual instance for the capacitated $k$-center problem is feasible. In
Section~\ref{sec:conservative-aprox0L}, an $(\alpha,4)$-independent set is
obtained from $A$, and Lemma~\ref{lem:alpha-independent} is used. This lemma is
valid for arbitrary capacities, and so it is useful here as well. To obtain an
$(\alpha,4)$-independent set from $B$, we make sure that $B$ can be partitioned
in such a way that any two parts are at least at distance $7$. This is done by
Algorithm~\ref{alg:conservative}, where \alg denotes an approximation for the
capacitated $k$-center problem.

\begin{algorithm}[H]
    \caption{capacitated conservative $\alpha$-fault-tolerant $k$-center, fixed~$\alpha$.}
    \SetAlgoNoLine
    \SetKwInOut{Input}{Input}
    \DontPrintSemicolon
    \label{alg:conservative}

    \BlankLine
    \Input{connected graph $G=(V,E)$, $k$, and $L : V \rightarrow \Znn$}
    \BlankLine

    \ForEach{$u \in V$}
    {
      \lIf{$L_u > |V|$}{$L_u \gets |V|$}\label{lin:cutcap}
    }
    $B \gets \emptyset$\label{lin:ini-def-B}\;
    \While{there is a set $U \subseteq V$ with~$|U| \le \alpha$ and $L(U) > L(B\cap N^6(U))$\label{lin:test}}
    {
        $B\gets (B\setminus N^6(U))\cup U$ \label{alg:augment-B}\;
    }\label{lin:fim-def-B}
    \ForEach{$u \in V$}
    {
       \llIf{$u \in B$}{$L'_u \gets 0$} \llElse{$L'_u \gets L_u$}
    }
    \eIf{{\sc alg}$(G,k-|B|,L')$ returns \failure}
    {
        \Return \failure\label{lin:failure2}
    }
    {
      Let $(S,\phi)$ be the solution returned by \sc{alg}$(G,k-|B|,L')$\;
      \Return $(S \cup B,\phi)$
    }
\end{algorithm}

Algorithm~\ref{alg:conservative} is polynomial in the size of $G$, $k$, and~$L$.
The test in Line~\ref{lin:test} is equivalent to finding a set $U \subseteq V$
with~$|U| = \alpha$ that minimizes ${L(B \cap N^6(U)) - L(U)}$ (note that this
is a particular case of minimizing a submodular function with cardinality
constraint). If, for an arbitrary $\alpha$, there were a polynomial-time
algorithm for finding such a set~$U$, then Algorithm~\ref{alg:conservative}
would be polynomial also in $\alpha$. Unfortunately, as we show in
Section~\ref{sec-complex}, such algorithm only exists if $\PP = \coNP$.
When $\alpha$ is fixed, we may enumerate the sets~$U$ in polynomial time. In the
following, we show that Algorithm~\ref{alg:conservative} is an approximation
algorithm for the capacitated conservative fault-tolerant $k$-center assuming
that $\alpha$ is fixed.

Next lemma is the analogous of Lemma~\ref{lem:7independent}, but applies to the
case with general capacities.

\begin{lemma}\label{lem:preopenBgeral}
Let $B$ be the set of vertices obtained by Algorithm~\ref{alg:conservative}
after the execution of Lines~\ref{lin:ini-def-B}-\ref{lin:fim-def-B}. If the
instance ${(G,k,L,\alpha)}$ is feasible for the capacitated conservative
$\alpha$-fault-tolerant $k$-center, then the instance ${(G, k - |B|, L')}$ is
feasible for the capacitated $k$-center, where~${L'_u = 0}$ for~$u$ in $B$, and
${L'_u = L_u}$ otherwise.
\end{lemma}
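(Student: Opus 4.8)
The plan is to mirror the proof of Lemma~\ref{lem:7independent}: I will exhibit an $(\alpha,4)$-independent set $W\subseteq S^*$ with $|W|=|B|$ and $B\cap S^*\subseteq W$, apply Lemma~\ref{lem:alpha-independent} to $W$, and then transfer feasibility to $(G,k-|B|,L')$ by a capacity-domination argument. The difference from Lemma~\ref{lem:7independent} is that here $B$ is produced by the while loop of Algorithm~\ref{alg:conservative} rather than sitting inside the neighborhood of a $7$-independent set, so both the independence of $B$ and the construction of $W$ require separate arguments.

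First I would analyze the structure of $B$. Writing $U_1,\dots,U_m$ for the sets successively inserted in Line~\ref{alg:augment-B}, I would assign each surviving vertex $v\in B$ to the largest index $j$ with $v\in U_j$, and collect the vertices with a common index into parts $C_1,\dots,C_t$, so that each part satisfies $|C_j|\le|U_j|\le\alpha$. The key observation is that whenever $U_j$ is inserted, all vertices of the current $B$ lying in $N^6(U_j)$ are deleted; hence a vertex of $C_s$ that is still present when $U_t$ is inserted ($s<t$) must avoid $N^6(U_t)$, which forces $d(C_s,C_t)>6$. This shows that $B$ is $(\alpha,6)$-independent, and in particular $(\alpha,4)$-independent; notably, the termination test of Line~\ref{lin:test} plays no role here, as this lemma is about center counts rather than backup capacity.

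Next I would turn the feasible solution $(S^*,\phi_0^*)$ into a neighbor-count statement. Since $(G,k,L,\alpha)$ admits a distance-$1$ conservative solution, every vertex $v$ must satisfy $|N(v)\cap S^*|\ge\alpha+1$: otherwise, failing the at most $\alpha$ centers of $N(v)\cap S^*$ (a set that includes $\phi_0^*(v)$, since the solution is distance-$1$) would leave $v$ with no center within distance $1$ to be reassigned to. Applying this to any vertex of a part $C_i$ gives $|N(C_i)\cap S^*|\ge\alpha+1\ge|C_i|$. I would then choose, for each $i$, a set $W_i\subseteq N(C_i)\cap S^*$ with $C_i\cap S^*\subseteq W_i$ and $|W_i|=|C_i|$. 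Because the parts are at pairwise distance greater than $6$, the neighborhoods $N(C_i)$ are disjoint, so the $W_i$ are disjoint; and since each $W_i$ lies within distance $1$ of $C_i$, they are pairwise at distance greater than $4$. Thus $W=\bigcup_i W_i$ is $(\alpha,4)$-independent, lies in $S^*$, has $|W|=\sum_i|C_i|=|B|$, and contains $B\cap S^*=\bigcup_i(C_i\cap S^*)$.

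Finally I would close the argument exactly as in Lemma~\ref{lem:7independent}. Setting the capacities of all non-centers to $0$ keeps the instance feasible, so Lemma~\ref{lem:alpha-independent} applied to $W$ shows that $(G,k-|W|,L''')$ is feasible, where $L'''$ zeros $W$ and all non-centers and equals $L$ elsewhere. Since $|B|=|W|$ and, using $B\cap S^*\subseteq W$, one checks $L'_u\ge L'''_u$ for every $u$ (the only nontrivial case being $u\in S^*\setminus W$, where $u\notin B$ and hence $L'_u=L_u=L'''_u$), feasibility transfers to $(G,k-|B|,L')$. I expect the main obstacle to be the construction of $W$: the set $B$ need not be contained in $S^*$, so the crux is the counting fact $|N(C_i)\cap S^*|\ge|C_i|$ coming from feasibility, together with the bookkeeping that simultaneously guarantees $|W|=|B|$ and the capacity domination $L'\ge L'''$.
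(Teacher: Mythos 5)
Your proof is correct and follows essentially the same route as the paper's: establish that $B$ is $(\alpha,6)$-independent, lift each part $C_i$ to a set $W_i\subseteq S^*\cap N(C_i)$ containing $C_i\cap S^*$ with $|W_i|=|C_i|$ (using that feasibility forces at least $\alpha+1$ centers of $S^*$ in $N(v)$ for every $v$), apply Lemma~\ref{lem:alpha-independent} to $W=\bigcup_i W_i$, and finish with the capacity-domination step $L'\ge L'''$. The only divergence is in how the partition of $B$ is obtained --- you classify surviving vertices by their last insertion step in Line~\ref{alg:augment-B} and argue directly that distinct classes are more than $6$ apart, whereas the paper takes the connected components of $G^6[B]$ and shows by contradiction that each has at most $\alpha$ vertices --- and both arguments are valid; your explicit check of $L'_u\ge L'''_u$ via $B\cap S^*\subseteq W$ also correctly spells out a step the paper leaves implicit.
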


\begin{proof}
  Recall Definition~\ref{alpha-independent}: a set of vertices
  is~$(\alpha,\ell)$-independent if it can be partitioned into sets $C_1,
  \dots, C_t$, such that $|C_i| \le \alpha$ for $1 \le i \le t$, and~$d(C_i,
  C_j) > \ell$ for $1 \le i < j \le t$. Let us argue that $B$ is
  $(\alpha,6)$-independent.

  Let $t$ be the number of components of $G^6[B]$ and take each $C_i$ to be
  the vertex set of one of the components of $G^6[B]$. Let us argue that
  $|C_i| \leq \alpha$ for every $i$ with $1 \leq i \leq t$. Suppose, for a
  contradiction, that $|C_i| > \alpha$ for some $i$ and let $U'$ be the
  vertices in $C_i$ that were inserted in~$B$ in the last iteration of
  Line~\ref{alg:augment-B} that affected $C_i$. Clearly $|U'| \leq \alpha$.
  Since $C_i$ corresponds to a connected component in $G^6[B]$ and $|C_i| >
  \alpha$, there must be a vertex in $C_i \setminus U'$ in~$N^6(U') \cup B$ at
  this execution of Line~\ref{alg:augment-B}, but then this vertex would have
  been removed from~$B$, a contradiction. So~$B$ is indeed
  $(\alpha,6)$-independent.

  Now, for each $i$ with $1 \le i \le t$, choose an arbitrary element $a_i$ in
  $C_i$. (Note that the set $A = \{a_1, \dots, a_t\}$ is $7$-independent
  in~$G$.)
  Consider a solution~$(S^*,\phi_0^*)$ for~$(G,k,L,\alpha)$ and observe that,
  for each $i$ with $1 \le i \le t$, there must be at least $\alpha+1$ centers
  in ${S^* \cap N(a_i)}$.  So let $W_i$ be the union of $C_i \cap S^*$ and
  other~$|C_i \setminus S^*|$ centers in~$S^* \cap N(a_i)$.  The set $W_i$ is
  well defined, as $|C_i| \le \alpha < |S^* \cap N(a_i)|$. Moreover, $|W_i| =
  |C_i|$ and $W_i \subseteq N(C_i)$.

  Let $W = \cup_{i = 1}^t W_i$ and note that $|W| = |B|$. For each
  pair~$i$,~$j$ with~${1 \le i < j \le t}$ we have that $d(W_i,W_j) > 4$,
  because $B$ is $(\alpha, 6)$-independent and thus $d(C_i, C_j) > 6$. Hence,
  $W$ is $(\alpha,4)$-independent.

  Let $L''$ be such that $L''_u = 0$ if ${u \notin S^*}$, and $L''_u = L_u$
  otherwise. Observe that the instance ${(G,k, L'', \alpha)}$ is feasible (as
  we only set to zero the capacities of non-centers). By
  Lemma~\ref{lem:alpha-independent}, the instance ${(G,k - |W|,L''')}$ is
  feasible, where ${L'''_u = 0}$ if $u \in W$, and ${L'''_u = L''_u}$
  otherwise.
  Notice that $L'_u \ge L'''_u$ for every $u$, and $|B| = |W|$. Therefore,
  since ${(G,k - |W|,L''')}$ is feasible, so is ${(G,k - |B|,L')}$.
\end{proof}

\begin{theorem}\label{thm:genconservative}
If \alg is a $\beta$-approximation for the capacitated $k$-center, then
Algorithm~\ref{alg:conservative} is a~$(\beta+6 \alpha)$-approximation for the
capacitated conservative $\alpha$-fault-tolerant $k$-center with fixed~$\alpha$.
\end{theorem}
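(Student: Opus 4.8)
The plan is to follow the template of the proof of Theorem~\ref{thm:conservative0L}, replacing the per-cluster capacity argument (which relied on uniform capacities) by a global flow argument. First I would dispatch correctness of the \failure output: if $(G,k,L,\alpha)$ is feasible, then Lemma~\ref{lem:preopenBgeral} gives feasibility of the residual instance $(G,k-|B|,L')$, so the bottleneck subroutine \alg cannot return \failure; contrapositively, \failure is reported only on infeasible instances. When \alg instead returns $(S,\phi)$, the bound $|S|\le k-|B|$ yields $|S\cup B|\le k$, and since $L'_u=0$ on $B$ we have $\phi^{-1}(B)=\emptyset$, so $\phi$ assigns every client at distance at most $\beta$ to a center of $S$, with nothing assigned to the backup set.

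The heart of the argument is the reassignment after a failure $F\subseteq S\cup B$ with $|F|\le\alpha$. Writing $F'=F\setminus B$ and $F_B=F\cap B$, only the clients in $\phi^{-1}(F')$ must move, and $|\phi^{-1}(F'')|\le L(F'')$ for every $F''\subseteq F'$. I would reassign each such client $u$, with $\phi(u)=f$, to a surviving backup in $(B\setminus F)\cap N^{6\alpha}(f)$; since $d(u,f)\le\beta$ and $d(f,b)\le 6\alpha$, this keeps the reassignment distance at most $\beta+6\alpha$, and only clients whose center failed are moved, so the reassignment is conservative. Such a simultaneous reassignment exists, by the feasibility (Hall / Gale--Hoffman) condition for the corresponding transportation problem, provided that
\[
|\phi^{-1}(F'')|\ \le\ L\bigl((B\setminus F)\cap N^{6\alpha}(F'')\bigr)\qquad\text{for every } F''\subseteq F'.
\]

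To verify this inequality I would exploit the termination condition of the while loop: upon exit, every $U$ with $|U|\le\alpha$ satisfies $L(U)\le L(B\cap N^6(U))$. Fixing $F''\subseteq F'$, I would build a chain $D_0=F''$, $D_{j+1}=D_j\cup(F_B\cap N^6(D_j))$, which stabilizes at some $D^*$ after at most $|F_B|\le\alpha-1$ steps, so $D^*\subseteq N^{6(\alpha-1)}(F'')$ and $|D^*|\le|F''|+|F_B|\le\alpha$. By construction $D^*$ is a fixed point, so the failed backups meeting $N^6(D^*)$ are exactly $D^*\cap F_B$; applying the termination condition to $U=D^*$ then gives $L(B\cap N^6(D^*))\ge L(D^*)=L(F'')+L(D^*\cap F_B)$, and subtracting the capacity $L(D^*\cap F_B)$ of the failed backups inside $N^6(D^*)$ leaves surviving capacity at least $L(F'')\ge|\phi^{-1}(F'')|$. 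Since $N^6(D^*)\subseteq N^{6\alpha}(F'')$, this establishes the displayed condition, which completes the reassignment and the distance bound $\beta+6\alpha$.

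I expect the main obstacle to be precisely this chain argument: the naive choice $U=F''$ only controls the total capacity in $N^6(F'')$ and does not account for backups that themselves fail, so one must \emph{route around} failed backups by iteratively enlarging the test set until it becomes self-contained, and then track how the radius grows (by $6$ per iteration, over at most $\alpha-1$ iterations) to obtain exactly the factor $6\alpha$. The remaining verifications---that the enlarged set $D^*$ still has size at most $\alpha$ so that the termination condition applies, that $\phi^{-1}(B)=\emptyset$, and that the transportation instance is feasible given the Hall condition---are routine.
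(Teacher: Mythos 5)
Your proposal is correct, and in substance it is the paper's argument expressed in dual form. The paper handles the reassignment by building an explicit flow network in which each failed backup $w\in B\cap F$ receives a second copy $\bar w$ and an infinite-capacity arc $(\bar w,w)$, so that flow may be routed \emph{through} failed backups, each hop costing distance $6$ with at most $\alpha$ hops overall; it then shows the minimum $s$-$t$ cut equals $|\phi^{-1}(F)|$ by observing that the infinite-capacity arcs force the cut-defining set $U\subseteq F$ to satisfy $N^6(U)\cap B\cap F\subseteq U$, and applying the while-loop invariant to that closed set $U$. Your chain $D_0=F''$, $D_{j+1}=D_j\cup(F_B\cap N^6(D_j))$ is exactly this closedness condition made explicit, your Hall verification is the dual of the paper's min-cut computation, and the radius bookkeeping ($6$ per absorbed failed backup, at most $\alpha-1$ absorptions, plus one final step of $6$) is identical; what your formulation buys is that you never need to exhibit the network or reason about cut structure, at the mild cost of having to justify that Hall's condition for the transportation instance need only be checked on unions of full fibres $\phi^{-1}(F'')$ (which holds because clients of the same failed center have identical neighbourhoods). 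The only small points to make explicit in a write-up are this reduction to fibres, the fact that $|F\cap B|\le\alpha-1$ is guaranteed only when $F''\ne\emptyset$ (the only case that matters), and the routine front matter the paper also includes (polynomial running time of the while loop via the bound $L(B)\le|V|^2$ after Line~\ref{lin:cutcap}).
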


\begin{proof}
  Let $(G,k,L, \alpha)$ be an instance of the capacitated conservative
  $\alpha$-fault-tolerant $k$-center. No center can have more than $|V|$
  clients assigned to it, so Line~\ref{lin:cutcap} does not affect a solution.

  Since $\alpha$ is fixed, each execution of Line~\ref{lin:test} takes time
  polynomial in~$|V|$.
  Also, each execution of Line~\ref{alg:augment-B} increases the value of
  $L(B)$ by at least one. But $L(B)$ is an integer, starts from 0, and is at
  most~$|V|^2$, because each vertex capacity is at most $|V|$ after executing
  Line~\ref{lin:cutcap}.
  Thus, the number of iterations is quadratic on $|V|$, and each one takes
  time polynomial in $|V|$. Finally, as \alg is a polynomial-time algorithm,
  we conclude that Algorithm~\ref{alg:conservative} is polynomial.

  By Lemma~\ref{lem:preopenBgeral}, we know that, if \alg returns \failure\ in
  Line~\ref{lin:failure2}, then the instance $(G,k,L,\alpha)$ is infeasible
  for the capacitated conservative $\alpha$-fault-tolerant $k$-center.
  On the other hand, if \alg returns a solution~$(S,\phi)$, then~$(S \cup
  B,\phi)$ is a valid set of centers and initial attribution for our problem,
  and is such that each vertex~$u$ is at distance at most~$\beta$
  from~$\phi(u)$. To complete our proof, we argue next that, for each failure
  scenario, each client~$u$ of a faulty center can be reassigned to a center
  at distance at most~$\beta+6\alpha$ from~$u$, and no center has its capacity
  exceeded by the reassignment.

  Consider a failure scenario~$F\subseteq V$ with $|F| = \alpha$. We define
  next a flow network $(H,c,s,t)$, with source $s$ and sink $t$, in which a
  maximum flow from~$s$ to~$t$ provides a valid distance-$(\beta{+}6\alpha)$
  reassignment for the clients of centers in~$F$ (see
  Figure~\ref{fig:flow-conservative}). Network graph $H = (V_H, E_H)$ (see
  figure below)  is such that the set $V_H$ of vertices is comprised of
  \begin{itemize}
    \item a copy of each~$y$ in $\phi^{-1}(F)$,
    \item a copy of each~$v$ in $F$,
    \item a copy of each~$u$ in $B \setminus F$,
    \item a second copy of each~$w$ in $B \cap F$, denoted by $\bar{w}$;
  \end{itemize}
  and, the set $E_H$ of arcs is comprised of
  \begin{itemize}
    \item for each~$y$ in $\phi^{-1}(F)$, an arc $(s,y)$ with capacity $c(s,y)=\infty$,
    \item for each~$v$ in $F$ and each~$y$ in $\phi^{-1}(v)$, an arc~$(y,v)$ with $c(y,v) = 1$,
    \item for each~$v$ in $F$ and each~$u$ in $B \cap N^6_G(v)$, an arc $(v,u)$ with $c(v,u) = \infty$,
    \item for each~$u$ in $B \setminus F$, an arc~$(u,t)$ with capacity $c(u,t) = L_u$, and,
    \item for each~$w$ in $B \cap F$, a (reversed) arc~$(\bar{w},w)$ with $c(\bar{w}, w) = \infty$.
  \end{itemize}

  \begin{figure}[h]
    \centering
    \scalebox{1.2}{\def\svgwidth{5.4cm}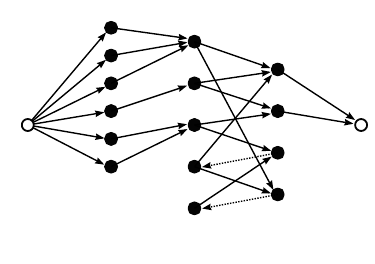}
    \caption{The flow network defined in terms of~$L$, $B$, $\phi$,
    and~$F$.\label{fig:flow-conservative}}
  \end{figure}

  Let~$C$ be the arcs of a minimum capacity $s$-$t$ cut in~$H$. We claim
  that~$c(C) = |\phi^{-1}(F)|$. Let $\delta^+(X)$ denote the set of arcs
  $(u,v)$ with $u \in X$, and $v \notin X$.
  Notice that $c(\delta^+(\phi^{-1}(F))) = 1 \cdot |\delta^+(\phi^{-1}(F))| =
  |\phi^{-1}(F)|$, so $c(C) \leq |\phi^{-1}(F)|$. Since only arcs from
  $\phi^{-1}(F)$ to~$F$ and from~$B$ to~$t$ have finite capacities, they are
  the only ones that can be in~$C$. Thus, there must be a set $U \subseteq F$
  such that
  \[
    C = \{ (y,v) : y \in \phi^{-1} (F\setminus U) \mbox{\ and } v = \phi(y)
    \} \;\; \cup \;\; ((N_H(U)\setminus F ) \times \{t\}),
  \]
  where the elements in $F$ refer to the first copy of each such element
  in~$V_H$.

  Let $Q =  N_G^6(U) \cap B \cap F$. We claim that $Q \subseteq U$. Let $v \in
  U$ and ${\bar{w} \in N_G^6(v) \cap B \cap F}$. Notice that arcs $(v,\bar{w})$
  and $(\bar{w}, w)$ have infinite capacities, and thus neither can be in the
  cut~$C$. It follows that $w \in U$, and indeed $Q \subseteq U$.

  Since there is no arc leaving $Q$ and reaching $t$, the previous equation
  allows us to express the capacity of~$C$ as
  \[
  c(C) = |\phi^{-1}(F\setminus U)| + L(N_H(U)\setminus Q).
  \]
  Notice that~$N_H(U) = (N_H(U)\setminus Q) \cup Q$. Hence, from the loop
  starting at Line~\ref{lin:test} of Algorithm~\ref{alg:conservative}, we have
  that~$L(U) \leq L(B\cap N^6_G(U)) = L(N_H(U)) = {L(N_H(U)\setminus Q) +
  L(Q)}$, and thus
  \begin{align*}
    c(C)&=    |\phi^{-1}(F\setminus U)| + L(N_H(U)\setminus Q) \\
        &\geq |\phi^{-1}(F\setminus U)| + L(U) - L(Q)\\
        &=    |\phi^{-1}(F\setminus U)| + |\phi^{-1}(U)| + (L(U) - |\phi^{-1}(U)|)   - L(Q)\\
        &\geq |\phi^{-1}(F\setminus U)| + |\phi^{-1}(U)| + L(Q)               - L(Q)\\
        &=    |\phi^{-1}(F\setminus U)| + |\phi^{-1}(U)| \ = \ |\phi^{-1}(F)|,
  \end{align*}
  where the second inequality comes from the fact that $\phi$ does not assign
  any vertex to~$Q$, and $Q \subseteq U$.

  So the value of a maximum integer flow on $H$ is exactly $|\phi^{-1}(F)|$,
  and thus every arc from $\phi^{-1}(F)$ to~$F$ has flow exactly~1.
  It is straightforward to obtain an assignment $\psi: \phi^{-1}(F)
  \rightarrow B \setminus F$. For each vertex~$y$ in $\phi^{-1}(F)$, let
  $\psi(y) = u$, where $u$ is the center in~$B\setminus F$ that receives the
  unit of flow going through $y$ (for example, in the previous figure, a unit
  of flow could traverse a path of vertices $s$, $y$, $v$, $\bar{x}$, $x$,
  $\bar{w}$, $w$, $u$, $t$, and so we set $\psi(y) = u$).

  Since each vertex $u$ in $B\setminus F$ can receive at most $L_u$ units of
  flow, clearly~$\psi$ respects the capacities. Moreover, since there are at
  most $\alpha -1$ elements in $B \cap F$, each unit of flow leaving a vertex
  $v$ in~$F$ can traverse at most $\alpha - 1$ reverse edges in $H$ (without
  creating a circle), so it can traverse at most $\alpha$ arcs from $F$ before
  reaching an element $u$ in $B \setminus F$. Therefore, $d_G(v, u) \le 6
  \alpha$.

  It follows that for every $y \in \phi^{-1}(F)$
  \[
  d_G(y,\psi(y)) \ \leq \ d_G(y,\phi(y)) + d_G(\phi(y),\psi(y)) \ \leq \ \beta
  + 6\alpha.
  \]

  Now we can define a conservative reassignment~$\phi_F$:
  \begin{align*}
    \phi_F(v) = \begin{cases}
                    \phi(v)    &\mbox{ if } \phi(v) \notin F, \\
                    \psi(v)    &\mbox{ otherwise.}
                \end{cases}
  \end{align*}

  We argue that~$\phi_F$ is a valid distance-$(\beta + 6\alpha)$ conservative
  reassignment.  Let~$u$ be a center opened by the algorithm (that is, $u \in
  S \cup B$). If $u \in S$, then $\phi_F^{-1}(u) = \phi^{-1}(u)$. If $u \in
  B$, then $\phi_F^{-1}(u) = \psi^{-1}(u)$. Also, both~$\phi$ and~$\psi$ do
  not exceed the capacities of the centers to which they assign unserved
  clients.  Finally, consider a vertex $y$ in~$V$.  If $\phi(y) \notin F$,
  then ${d_G(y,\phi_F(y)) = d_G(y,\phi(y)) \leq \beta}$. If~$\phi(y) \in F$,
  then $d_G(y,\phi_F(y)) = d_G(y,\psi(y)) \leq \beta + 6\alpha$.
\end{proof}

Using the best known approximation for the capacitated $k$-center, we obtain the
following.

\begin{corollary}
Algorithm~\ref{alg:conservative} using the algorithm by
An~\etal~\cite{AnBCGMS15} for the capacitated $k$-center is a $(9 +
6\alpha)$-approximation for the capacitated conservative $\alpha$-fault-tolerant
$k$-center with fixed~$\alpha$.
\end{corollary}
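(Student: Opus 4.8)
The plan is to derive this corollary directly from Theorem~\ref{thm:genconservative} by supplying a concrete subroutine for \alg. The theorem already does all of the heavy lifting: it shows that, for \emph{any} $\beta$-approximation \alg for the capacitated $k$-center, Algorithm~\ref{alg:conservative} turns it into a $(\beta+6\alpha)$-approximation for the capacitated conservative $\alpha$-fault-tolerant $k$-center when $\alpha$ is fixed. So the only remaining task is to plug in the best available value of~$\beta$ and check that its hypotheses are genuinely met.

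First I would recall that the algorithm of An~\etal~\cite{AnBCGMS15} is a polynomial-time $9$-approximation for the (arbitrary-capacity) capacitated $k$-center. It is important here to invoke their bound for \emph{general} capacities (factor $9$), and not the sharper factor~$6$ that applies only to the $\{0,L\}$-capacitated case used in Corollary~\ref{cor:conservative0L}: Algorithm~\ref{alg:conservative} produces a residual instance $(G,k-|B|,L')$ whose capacity function $L'$ is arbitrary, so a routine valid for the non-uniform case is exactly what is required. I would then verify that An~\etal's routine matches the interface demanded by Theorem~\ref{thm:genconservative}, namely that, given $(G,k-|B|,L')$, it either reports infeasibility or returns a distance-$\beta$ solution for the residual capacitated $k$-center, in polynomial time.

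With $\beta=9$ in hand, I would instantiate \alg with An~\etal's algorithm inside Algorithm~\ref{alg:conservative} and invoke Theorem~\ref{thm:genconservative} to conclude that the resulting algorithm is a $(\beta+6\alpha)=(9+6\alpha)$-approximation for the capacitated conservative $\alpha$-fault-tolerant $k$-center with fixed~$\alpha$. There is essentially no obstacle in this step: all the combinatorial content — the construction of the backup set~$B$, the feasibility of the residual instance via Lemma~\ref{lem:preopenBgeral}, and the flow-based reassignment argument bounding every reassignment distance by $\beta+6\alpha$ — is already established in the theorem. The only point that requires care is matching interfaces, i.e.\@ confirming that An~\etal's approximation accommodates the non-uniform capacities of the residual instance and respects the failure/negative-certificate convention, so that the substitution $\beta=9$ is legitimate and the theorem applies verbatim.
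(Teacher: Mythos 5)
Your proposal is correct and matches the paper exactly: the corollary is obtained by instantiating \alg in Theorem~\ref{thm:genconservative} with the $9$-approximation of An~\etal for the (general-capacity) capacitated $k$-center, giving $\beta+6\alpha = 9+6\alpha$. Your added care in distinguishing the factor~$9$ for arbitrary capacities from the factor~$6$ for the $\{0,L\}$ case is precisely the right interface check, and the paper treats this step as immediate for the same reason.
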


\section{Capacitated fault-tolerant $k$-center}
\label{sec:noncons}

\subsection{An initial LP formulation}
\label{subsec-iniLP}

Recall that we are given an unweighted connected graph, and the objective is to
decide whether there is a distance-$1$ solution (see Section~\ref{sec:prel}). As
in~\cite{CyganK14}, we use an integer LP that formulates the problem. If, after
relaxing the integrality constraints, the LP is infeasible, then we know that
there is no distance-$1$ solution, otherwise we round the solution, and obtain
an approximate solution.

In the natural formulation for the capacitated $k$-center, we have opening
variables~$y_u$ for each vertex~$u$, representing the choice of $u$ as a center,
and assignment variables $x_{uv}$ representing that vertex~$v$ is assigned to
center~$u$. In the case of the fault-tolerant $k$-center, for each failure
scenario, that is, for each possible set~$F \subseteq V$ of centers that may
fail, with $|F| \le \alpha$, we must have a different assignment from vertices
to non-faulty centers opened by $y$. One possibility to formulate the
fault-tolerant variant is having different assignment variables for each $F$. To
simplify the formulation, rather than creating a different set of assignment
variables for each failure scenario, we use an equivalent formulation based on
Hall's condition, which is a necessary and sufficient condition for a bipartite
graph to have a perfect matching~\cite{Hall35}. The integer linear program,
denoted by $\ILP$, is the following:
\begin{equation*}
\begin{array}{r@{$\,\,$}l@{\qquad}l}
\sum_{u\in V}y_u  =  &  k   &  \\
| U | \le & \sum_{u \in N_G(U) \setminus F} y_u L_u
          & \forall \; U \subseteq V, \; F \subseteq V : |F| = \alpha\\
  y_u \in & \{ 0 , 1 \}
          & \forall \; u \in V.
\end{array}
\end{equation*}

We remark that $\ILP$ formulates the capacitated $\alpha$-fault-tolerant
$k$-center. The first constraint guarantees that exactly $k$ centers are opened,
and the second set of constraints guarantees that, for each failure scenario,
there is a feasible assignment from clients to opened centers that did not fail.
Indeed, notice that, for a fixed $F$, the existence of such an assignment is
equivalent to the existence of a matching on the bipartite graph formed by
clients and open units of capacity that matches all clients. Hall's result,
together with the second set of constraints of $\ILP$, assures the existence of
such a matching, and thus of such an assignment.

\paragraph{Integrality gap.}

As a first attempt, one can relax $\ILP$ directly. When the integrality
constraints are relaxed, however, the opening fraction on $y$ of a failure
scenario~$F$ of (fractionally opened) centers might be strictly less
than~$\alpha$, that is, $y(F) < \alpha$. Thus the considered constraints are
weaker than desired. Indeed, consider the following example.
Let $C_n$ be a cycle on $n$ vertices, for $n = s^2$ where $s$ is a positive even
integer, and let $G_n$ be the graph obtained from $C_n$ by adding edges between
two vertices at distance at most~$s$ in $C_n$. Note that any pair of antipodes
in $C_n$ are at distance~$s/2$ in~$G_n$. If~$L_u = n$ for every~$u$ in~$G_n$, $k
= s$, and $\alpha = k-1$, then the cost of any solution for this instance
is~$s/2$, as for any set of~$k$ centers in~$G_n$, all but one center might fail.
Now, let $y$ be the vector with $y_u = 1/s$ for every~$u$. We claim that $y$ is
feasible for the relaxation of~$\ILPn$. Indeed, first notice that~$\sum_{u\in V}
y_u = s = k$. Also, since every vertex has~$2s$ neighbors, for any set of
centers~$F$ of size~$\alpha = k-1 = s-1$, the second set of constraints is
satisfied, because the right side is always at least $n$, and the left side is
at most~$n$. So $y$ is feasible, and thus the lower bound obtained from the
relaxation of $\ILPn$ may be arbitrarily small when compared to an optimal
solution, that is, the minimization problem obtained from $\ILPn$ has unbounded
integrality gap.

\subsection{Dealing with the integrality gap}
\label{subsec:clustpreopen}

Suppose that we knew a subset $B$ of the centers of an optimal solution that
might fail. Then we could set $y_u = 1$ for each $u$ in $B$, that is, we decide
opening $u$ before solving the LP. This would avoid the problem in the example
with unbounded integrality gap whenever the failure scenario is $F \subseteq B$,
as in such a case we would have~$y(F) = |F|$.
Since we do not know how to obtain a subset $B$ of centers of an optimal
solution, and a failure scenario $F$ might contain centers not in $B$, we aim at
two more relaxed goals:
\begin{description}
  \item[(G1)] we choose a subset of centers $B$ that are close to distinct
  centers of an optimal solution; and

  \item[(G2)] we assume that only centers in $B$ might fail, and this
  comprises the worst case scenario.
\end{description}

To achieve these goals, we will make use of a standard clustering technique.
Intuitively, a clustering is a partition of the graph so that the elements of
each part are close to some centers in an optimal solution. Locally, the worst
case scenario corresponds to the failure of the highest capacitated centers in a
cluster. The clustering and the selection of pre-opened centers are described
precisely in the following.

\paragraph{Clustering.}

Clustering has been used by several algorithms for the $k$-center problem, for
both the capacitated~\cite{BarilanKP93,KhullerS00} and fault-tolerant
cases~\cite{ChaudhuriGR98,ChechikP15,KhullerPS00,Krumke95}. We use the
construction considered by Khuller and Sussmann~\cite{KhullerS00}. Their
algorithm works by greedily selecting a new vertex $v$ at distance~$3$ from the
set of previous selected centers, and creating a clustering with all not yet
clustered vertices of $N^2(v)$. The relevant result is replicated in next lemma.

\begin{lemma}[\cite{KhullerS00}]\label{lem:monarchtree}
Given a connected graph $G = (V,E)$, one can obtain a set of midpoints
$\Gamma\subseteq V$, and a partition of $V$ into sets $\{C_v\}_{v\in \Gamma}$,
such that
\begin{itemize}
  \item there exists a rooted tree $T$ on $\Gamma$,  with $d_G(u,v) = 3$ for
  every edge~$(u,v)$ of $T$;

  \item $N_G(v) \subseteq C_v$ for every $v$ in $\Gamma$; and

  \item $d_G(u,v) \le 2$ for every $v$ in $\Gamma$ and every $u$ in $C_v$.
\end{itemize}
\end{lemma}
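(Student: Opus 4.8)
The plan is to build $\Gamma$ greedily, one midpoint at a time, growing the tree $T$ in parallel, and only afterwards distribute the vertices of $V$ among the clusters. First I would choose an arbitrary vertex $r$ as the first midpoint and the root of $T$. Then, as long as there is some vertex at distance at least $3$ from the current set $\Gamma$, I would add to $\Gamma$ such a vertex chosen at distance \emph{exactly} $3$, so that it can be joined to an existing midpoint by a tree edge of the required length.

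The first thing to verify is that this greedy step is always available while the construction is unfinished, and this is where connectivity enters. If some vertex $w$ satisfies $d_G(w,\Gamma) = m \ge 3$, take a shortest path $w = p_0, p_1, \dots, p_m$ from $w$ to a nearest midpoint $p_m \in \Gamma$. Since the path is shortest, $d_G(p_i,\Gamma) = m - i$ for every $i$, so $p_{m-3}$ lies at distance exactly $3$ from $\Gamma$. Hence, whenever a far vertex exists, a vertex at distance exactly $3$ exists as well. I would add this vertex $v := p_{m-3}$ to $\Gamma$ and make its parent in $T$ a midpoint realizing $d_G(v,\Gamma) = 3$; such a midpoint exists because the minimum defining $d_G(v,\Gamma)$ is attained, and the resulting tree edge has length exactly $3$. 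The process terminates since $\Gamma \subseteq V$ grows strictly and $V$ is finite, and on termination every vertex of $V$ is within distance $2$ of $\Gamma$.

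A useful invariant, maintained throughout, is that any two distinct midpoints are at distance at least $3$, since each newly added midpoint is by construction at distance at least $3$ from all previously chosen ones. In particular, the neighborhoods $N_G(v)$ of distinct midpoints are pairwise disjoint, since a common element of $N_G(v_1)$ and $N_G(v_2)$ would place $v_1$ and $v_2$ at distance at most $2$. This disjointness is exactly what makes the clustering well defined.

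It remains to construct the partition $\{C_v\}_{v \in \Gamma}$. I would assign every vertex $u \in N_G(v)$ to $C_v$; by the disjointness above this rule is unambiguous and guarantees $N_G(v) \subseteq C_v$. Every remaining vertex $u$ lies at distance exactly $2$ from $\Gamma$ (it is within distance $2$ by termination, and is not in any $N_G(v)$), and I would assign it to any midpoint $v$ with $d_G(u,v) = 2$. This yields a partition of $V$ in which each cluster $C_v$ satisfies $d_G(u,v) \le 2$ for every $u \in C_v$, establishing all three required properties. The only delicate point in the whole argument is the connectivity-based claim that a vertex at distance exactly $3$ is available at each step; everything else follows from the distance-at-least-$3$ invariant and a direct case analysis on the assignment rule.
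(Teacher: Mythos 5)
Your construction is correct and is essentially the same greedy procedure the paper attributes to Khuller and Sussmann: repeatedly add a vertex at distance exactly $3$ from the current midpoint set (using the shortest-path argument to show such a vertex exists while any vertex is far), link it to a realizing midpoint to build $T$, and then distribute the remaining vertices. Your only deviation is deferring the cluster assignment until after $\Gamma$ is complete and handling $N_G(v)$ before the leftover distance-$2$ vertices, which cleanly enforces $N_G(v)\subseteq C_v$ via the disjointness of the neighborhoods; this is a harmless reorganization of the same idea.
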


\paragraph{Selecting pre-opened centers.}

We apply Lemma~\ref{lem:monarchtree} and obtain a clustering of $V$. Let $v$
in~$\Gamma$ be a cluster midpoint, and consider any distance-$1$ solution. Since
up to~$\alpha$ centers in this solution may fail, there must be at least $\alpha
+ 1$ centers in~$N(v)$, as otherwise there would be a failure scenario for which
$v$ is not connected. Thus, the elements of $C_v$ are within distance~$3$ from
at least $\alpha+1$ centers in $C_v$. Moreover, since sets $N(v)$ are disjoint
for~$v$ in~$\Gamma$, there are at least $\alpha+1$ centers per cluster in any
distance-$1$ solution.

To achieve (G1), we may select, for each cluster, any subset of up to~$\alpha+1$
vertices in the cluster. To achieve (G2), we reason on the total capacity that
may become unavailable when failure occurs. For each cluster, the largest amount
of capacity that can be discounted in a given scenario does not exceed the
accumulated capacity of the $\alpha$ most capacitated vertices in the cluster.
Thus, we select these vertices as set~$B$.

Formally, for each $v$ in $\Gamma$, let $B_v \subseteq C_v$ be a set of $\alpha$
elements of $C_v$ with largest capacities. This is the set of pre-opened centers
for cluster $C_v$. The set of all pre-opened centers is defined as
\[
B = \cup_{v \in \Gamma} B_v.
\]

\subsection{Modifying the LP formulation}

We pre-open the elements of $B$ by adding to $\ILP$ the constraint $y_u = 1$,
for every $u \in B$. When we establish a partial solution in advance, we may
turn the original linear formulation infeasible, since it is possible that no
distance-$1$ solution opens the elements of~$B$. However, since in any
distance-$1$ solution there are at least $\alpha$ centers in a given cluster,
each center in such a solution is within distance $3$ to a distinct element of
$B$ of non-smaller capacity.
Thus, we can convert a distance-$1$ solution into a distance-$4$ solution by
reassigning clients to elements of $B$, while preserving most of the structure
in the original LP.

\paragraph{Fixing feasibility.}

To obtain a useful LP relaxation, while pre-opening the set $B$ of centers, we
modify the supporting graph $G$. For each cluster $C_v$, we augment $G$ with
edges connecting each client that could be potentially served by centers
in~$C_v$ to each vertex in the set~$B_v$. Precisely, we define the directed
graph $G' = G'(G,\{C_v\}_{v\in\Gamma}) = (V, E')$, where~$E'$ is the set of
arcs~$(u,w)$ such that $\{u,w\} \in E$,  or there exist~$v$ in~$\Gamma$ and~$t$
in~$N(v)$ such that $\{u,t\} \in E$ and $w \in B_v$ (see
Figure~\ref{fig:closure}).
We remark that a directed graph is used, because we want to allow for a
reassignment of a client from an arbitrary center in the cluster to a center
in~$B$, but not the other way around.

\begin{figure}[h]
  \centering
  \scalebox{1.2}{\def\svgwidth{4cm}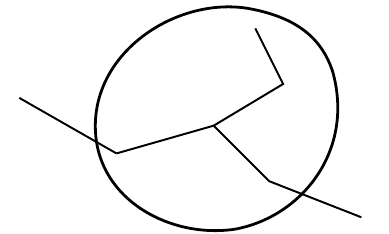}
  \caption{Dashed lines represent arcs added to the graph $G$ to obtain $G'$,
  and solid lines represent duplicated arcs in opposite directions. The white
  vertices represent $B_v$.\label{fig:closure}}
\end{figure}

\paragraph{A new formulation.}

In the new formulation, we consider only scenarios $F \subseteq B$. Thus, in a
feasible solution~$y$, we will have $y(F) = |F|$ for each scenario $F$. Also,
for each cluster midpoint, we want to (fractionally) open at least one
non-faulty center in its neighborhood, for each failure scenario. For the
integer program $\ILP$, this was implicit by the constraints, but when $y$ is
not integral, there might be high capacity centers that satisfy the local demand
with less than one open unit. Therefore, we have an additional constraint for
each cluster midpoint $v$ to ensure that there is one unit of (fractional)
opening in $N(v)$ excluding any opening coming from $B$.
We obtain a new linear program, denoted by $\LP$.
\begin{equation*}
\begin{array}{r@{$\,\,$}l@{\qquad}l}
\sum_{u\in V}y_u  =  &  k   &  \\
| U | \le & \sum_{u \in N_{G'}(U) \setminus F} y_u L_u
          & \forall \; U \subseteq V, \; F \subseteq B : |F| = \alpha\\
    1 \le & \sum_{u \in N_{G}(v) \setminus B} y_u
          & \forall \; v \in \Gamma\\
    y_u = & 1
          & \forall \; u \in B\\
  0 \le  y_u \le & 1
          & \forall \; u \in V.
\end{array}
\end{equation*}

Notice that, contrary to $\ILP$, program $\LP$ depends on the obtained
clustering. The following lemma states that $\LP$ is a ``relaxation'' of $\ILP$,
that is, if $\LP$ is infeasible, then we obtain a certificate that no
distance-$1$ solution for $G$ exists.

\begin{lemma}\label{lem:feasible}
If $\ILP$ is feasible, then $\LP$ is feasible.
\end{lemma}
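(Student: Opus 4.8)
The plan is to start from a feasible solution $y^*$ of $\ILP$ and build from it a feasible solution $\hat y$ of $\LP$. The target program $\LP$ differs from $\ILP$ in four ways: it forces $y_u = 1$ on $B$, it restricts failure scenarios to $F \subseteq B$, it replaces $N_G$ by $N_{G'}$ in the covering constraints, and it adds the midpoint constraints $\sum_{u \in N_G(v) \setminus B} y_u \ge 1$. My solution $\hat y$ should open all of $B$ integrally and redistribute the remaining opening mass so that the cardinality constraint $\sum_u \hat y_u = k$ is preserved. Concretely, I would move each unit of opening that $y^*$ places on a vertex $w$ of a cluster $C_v$ onto the corresponding high-capacity backup vertex in $B_v$, using the structural fact established before the lemma: in any distance-$1$ solution there are at least $\alpha+1$ centers in $N(v)$, and each center of a cluster is within distance $3$ of a \emph{distinct} element of $B$ of no-smaller capacity. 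This injective, capacity-nondecreasing matching from the support of $y^*$ inside each cluster into $B_v$ is exactly what the arcs added in $G'$ were designed to encode.

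\textbf{Verifying the constraints one at a time.}
Having defined $\hat y$, I would check each family of constraints of $\LP$.
The cardinality constraint $\sum_u \hat y_u = k$ follows because the redistribution only moves mass, never creates or destroys it.
The constraints $y_u = 1$ for $u \in B$ and $0 \le y_u \le 1$ hold by construction, once I argue that no backup vertex accumulates more than one unit — this is where the injectivity of the matching into $B_v$ is essential.
For the covering constraints, I fix $U \subseteq V$ and $F \subseteq B$ with $|F| = \alpha$, and I want
\[
|U| \le \sum_{u \in N_{G'}(U) \setminus F} \hat y_u L_u.
\]
The idea is to compare this against a genuine constraint of $\ILP$. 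Since $\hat y$ was obtained by pushing mass onto vertices of no-smaller capacity, and since the new arcs of $G'$ guarantee $N_G(U) \subseteq N_{G'}(U)$ together with the backup targets, the total reachable weighted capacity $\sum_{u \in N_{G'}(U)} \hat y_u L_u$ is at least the corresponding quantity for $y^*$ in $G$. Removing a set $F \subseteq B$ of $\alpha$ backup centers discounts at most the accumulated capacity of the $\alpha$ most capacitated cluster vertices, which is precisely the worst case an $\alpha$-scenario in $\ILP$ can inflict; so the $\ILP$ constraint for a suitable $F' \subseteq V$ of size $\alpha$ dominates.
Finally, the midpoint constraints $\sum_{u \in N_G(v) \setminus B} \hat y_u \ge 1$ should follow because, even after removing the $\alpha$ backup vertices of $C_v$, the at-least-$(\alpha+1)$ centers of $y^*$ in $N(v)$ leave at least one unit of opening in $N_G(v) \setminus B$.

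\textbf{Where the difficulty concentrates.}
I expect the main obstacle to be the covering constraint, and specifically the bookkeeping that matches a scenario $F \subseteq B$ in $\LP$ back to a scenario $F' \subseteq V$ in $\ILP$ while simultaneously tracking how the capacity-nondecreasing relocation interacts with the enlarged neighborhood $N_{G'}(U)$. The subtle point is that $F$ in $\LP$ only removes backup centers, whereas $y^*$ could have concentrated its opening on arbitrary cluster vertices; I must argue that discarding the $\alpha$ backups from $N_{G'}(U)$ costs no more weighted capacity than discarding the true worst-case $\alpha$ centers in the original instance, which is exactly the monotonicity built into the choice of $B_v$ as the $\alpha$ largest-capacity vertices. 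Setting up this inequality cleanly — rather than the routine checks on cardinality and the midpoint constraints — is the crux of the argument.
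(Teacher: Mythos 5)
Your overall plan coincides with the paper's: relocate the opening mass of a feasible $\ILP$ solution onto the backup sets $B_v$ via an injective, capacity-nondecreasing matching (the paper calls it $\beta$, defined by sorting $B_v$ and $R\cap N(v)$ by capacity and pairing the $i$-th largest with the $i$-th largest), take $\hat y$ to be the characteristic vector of the image, and map each scenario $F\subseteq B$ of $\LP$ back to a scenario of $\ILP$. The cardinality, box, pre-opening and midpoint constraints are handled exactly as you describe.

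However, the one step you yourself flag as the crux — the covering constraint — is justified by an argument that does not hold as stated. You argue that (a) the total reachable weighted capacity over $N_{G'}(U)$ does not decrease under the relocation, and (b) the discount caused by deleting $F\subseteq B$ is ``at most the accumulated capacity of the $\alpha$ most capacitated cluster vertices, which is precisely the worst case an $\alpha$-scenario in $\ILP$ can inflict.'' Claim (b) is false in general: since the relocation pushes mass onto vertices of \emph{no smaller} (possibly much larger) capacity, deleting $F\subseteq B$ can remove strictly more weighted capacity from $\hat y$ than any $\alpha$-scenario could remove from $y^*$ in $\ILP$. For instance, if $B_v$ contains a vertex of capacity $1000$ onto which the mass of an open center of capacity $100$ was moved, the $\LP$ discount for $F\ni$ that vertex is $1000$ while no $\ILP$ scenario discounts more than $100$; the inequality is rescued only because the \emph{total} also grew by $900$, so comparing totals and discounts separately cannot close the argument. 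The paper avoids this by never subtracting: it sets $F'=\beta^{-1}(F)$ (which has size exactly $\alpha$ by injectivity), invokes the $\ILP$ constraint for $F'$, and then observes that $\beta$ maps the surviving set $(N_G(U)\cap R)\setminus\beta^{-1}(F)$ injectively into $N_{G'}(U)\setminus F$ without decreasing any capacity, giving
\[
|U|\ \le\ \sum_{u\in N_G(U)\setminus\beta^{-1}(F)}y_uL_u\ \le\ \sum_{u\in\beta((N_G(U)\cap R)\setminus\beta^{-1}(F))}L_u\ \le\ \sum_{u\in N_{G'}(U)\setminus F}\hat y_uL_u .
\]
To complete your proof you need this direct comparison of surviving sets (which also requires checking that $\beta(N_G(U)\cap R)\subseteq N_{G'}(U)$, the purpose of the added arcs of $G'$, and that the tie-breaking in the ordering makes $\beta$ globally injective); the ``worst-case discount'' shortcut is a genuine gap.
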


\begin{proof}
Suppose that $\ILP$ is feasible. Let $y$ be a feasible solution for~$\ILP$,
and let~$R$ be the set of centers corresponding to $y$.

First, we define an injection $\beta$ from $R$ into $R \cup B$ that covers $B$.
Recall that~$\Gamma$ is the set of midpoints. For each $v$ in $\Gamma$, let
$u_1, \dots, u_\alpha$ be the elements of~$B_v$ in non-increasing order of
capacity. Analogously, let $w_1, \dots, w_\alpha, \dots$ be the elements of ${R
\cap N(v)}$ in non-increasing order of capacity (recall that each~$N(v)$ has at
least $\alpha+1$ centers in an optimal solution).  In case of ties, elements
in~$B_v$ should come first in this ordering.  For each~$i$ with $1 \le i \le
\alpha$, we define $\beta(w_i) = u_i$. Finally, for each~$w$ in~$R$
whose~$\beta(w)$ is not yet defined, let $\beta(w) = w$. Notice that, because of
the tie-breaking rule, in this case, $w \not\in B$.  Also, $L_w \le
L_{\beta(w)}$ for every $w$ in $R$, the inverse function $\beta^{-1}$ is
well-defined on the image of $\beta$.

Let $R' = \beta(R)$, and let $y'$ be the characteristic vector of $R'$. We claim
that~$y'$ is a feasible solution for $\LP$. Let $U \subseteq V$ and $F \subseteq
B$ with~$|F| = \alpha$. From the feasibility of $y$ for $\ILP$, and as
$|\beta^{-1}(F)|=|F|=\alpha$, we have
\begin{align*}
  |U| & \textstyle
      \ \le \ \sum_{u \in N_{G}(U) \setminus \beta^{-1}(F)} y_u L_u
      \ = \ \sum_{u \in (N_{G}(U) \setminus \beta^{-1}(F)) \cap R} L_u \\
      &\textstyle
      \ = \ \sum_{u \in (N_{G}(U) \cap R) \setminus \beta^{-1}(F)} L_u
      \ \leq \ \sum_{u \in \beta((N_{G}(U) \cap R) \setminus \beta^{-1}(F))} L_u \\
      &\textstyle
      \ = \ \sum_{u \in \beta(N_{G}(U)\cap R) \setminus F} y'_u L_u
      \ \le \ \sum_{u \in N_{G'}(U) \setminus F} y'_u L_u.
\end{align*}
The verification that the other constraints also hold for $y'$ is
straightforward.
\end{proof}

Though $\LP$ has an exponential number of constraints, the following lemma shows
that it has a polynomial-time separation oracle.

\begin{lemma}\label{lem:sep-LP}
For fixed $\alpha$, there is an algorithm that, in polynomial time, decides
whether a vector~$y$ is feasible for $\LP$. If $y$ is not feasible, the
algorithm also outputs a constraint of $\LP$ that is violated by $y$.
\end{lemma}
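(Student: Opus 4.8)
The plan is to separate the constraints of $\LP$ according to their type. The first, third, fourth, and fifth families are polynomial in number, so I would evaluate each directly and report any that fails. The entire difficulty lies in the second family,
\[
|U| \le \sum_{u \in N_{G'}(U)\setminus F} y_u L_u \qquad \forall\, U \subseteq V,\ F \subseteq B,\ |F| = \alpha,
\]
which has exponentially many instances because $U$ ranges over all subsets of $V$.

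First I would observe that, since $\alpha$ is fixed and $|B| \le |V|$, there are only $\binom{|B|}{\alpha} = O(|V|^\alpha)$ admissible sets $F$, a polynomial number; so I may enumerate them. It then suffices to exhibit, for each fixed $F$, a polynomial-time routine that either certifies $|U| \le \sum_{u \in N_{G'}(U)\setminus F} y_u L_u$ for \emph{every} $U \subseteq V$, or produces a single $U$ that violates it. The key observation is that, for fixed $F$, this family is a Hall-type (deficiency) condition and can be tested with one maximum-flow computation.

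Concretely, for a fixed $F$ I would build a flow network $H_F$ with a source $s$, a sink $t$, a node for every client $v \in V$ joined to $s$ by an arc of capacity $1$, a node for every center $u \in V \setminus F$ joined to $t$ by an arc of capacity $y_u L_u$, and, for each client $v$ and each $u \in N_{G'}(v) \setminus F$, an arc $(v,u)$ of capacity $\infty$. In any $s$--$t$ cut, let $U$ be the set of clients on the source side; since the client-to-center arcs are infinite, every center in $N_{G'}(U) \setminus F$ is forced to the source side, and its arc to $t$ is cut. Hence the cut capacity equals $(|V| - |U|) + \sum_{u \in N_{G'}(U)\setminus F} y_u L_u$, so the minimum cut has value $|V|$ if and only if $\sum_{u \in N_{G'}(U)\setminus F} y_u L_u \ge |U|$ for every $U$. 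Thus a maximum flow of value $|V|$ certifies that all second-type constraints for this $F$ hold; otherwise, the clients on the source side of a minimum cut form a set $U$ whose constraint is violated, and I would output the pair $(U,F)$.

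The complete oracle runs this max-flow test for each of the $O(|V|^\alpha)$ sets $F$ and checks the remaining (polynomially many) constraints explicitly; everything is polynomial for fixed $\alpha$, and fractional capacities $y_u L_u$ pose no problem for exact max-flow. The main point to set up carefully is the correspondence between $s$--$t$ cuts of $H_F$ and the sets $U$: one must make sure the direction of the arcs of $G'$ (a client points to the centers that may serve it) is reflected in $H_F$ so that the minimum cut indeed ranges over the deficiency expression $|U| - \sum_{u \in N_{G'}(U)\setminus F} y_u L_u$. Once this is verified, the equivalence with Hall's condition and the extraction of a violated constraint from a minimum cut are routine.
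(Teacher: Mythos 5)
Your proposal is correct and follows essentially the same route as the paper: enumerate the $O(|V|^{\alpha})$ scenarios $F$, and for each one reduce the Hall-type family over all $U \subseteq V$ to a single max-flow/min-cut computation on exactly the network the paper depicts (unit arcs from $s$ to clients, infinite arcs along $G'$, arcs of capacity $y_uL_u$ into $t$). The paper merely phrases the per-$F$ subproblem first as a totally unimodular covering ILP before observing it is a min-cut, whereas you argue the cut structure directly; the content is the same.
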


\begin{proof}
We concentrate on the second set of constraints, as there are polynomially many
constraints of the other types. Notice that the number of distinct scenarios~$F$
is~$O(|V|^\alpha)$, which is polynomial since $\alpha$ is constant. Fix a
failure scenario $F$ and suppose that we can solve the following problem:
\begin{equation}\label{eq:min-nonconservative}
\min_{U \subseteq V} \sum_{u \in N_{G'}(U) \setminus F} y_u L_u - |U|.
\end{equation}
If this value is non-negative, then all constraints in the second set for this
scenario~$F$ are satisfied, otherwise there is a subset $U^*$ of $V$ for which
the constraint is violated, and we are done. We can rewrite the minimization
problem above as the following integer linear program on binary variables $a_u$
for $u$ in $V$, and $b_u$ for $u$ in $V \setminus F$:
\begin{equation*}
  \begin{array}{rll}
    \min & \multicolumn{2}{l}{\sum_{u \in V \setminus F} b_u (y_u L_u) + \sum_{u \in V} a_u - |V|} \\
    \mbox{s.t.} \ & a_u + b_v \ge 1          & \quad \forall \;\; (u,v) \in G' \\
                  & a_u, \, b_v \in \{0,1\}  & \quad \forall \;\; u \in V, \; v \in V\setminus F.
  \end{array}
\end{equation*}
Variable $a_u$ indicates that $u$ is not in $U$, and variable $b_v$ indicates
that there exists some $u$ in the adjacency list of $v$ that is in $U$ (that is,
$a_u = 0$). The corresponding matrix for this problem is totally unimodular, so
the relaxation has an integral optimal solution, which can be found in
polynomial time. Notice that this problem (excluding the constant $-|V|$ in the
objective function) corresponds to the min-cut formulation for the network flow
problem depicted in Figure~\ref{fig:network-flow}, so it suffices to run any
max-flow min-cut algorithm.
\begin{figure}
  \centering
  \scalebox{1.2}{\def\svgwidth{4cm}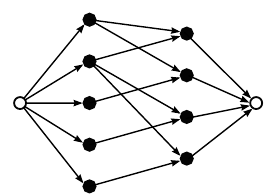}
  \caption{Flow network with source $s$ and target $t$. There is a unit
  capacity arc from $s$ to each $v \in V$, and an uncapacitated arc from $v$
  to each neighbor $u \in N(U)\setminus F$. Moreover, for each $u \in V
  \setminus F$, there is an arc to $t$ with capacity
  $y_uL_u$.\label{fig:network-flow}}
\end{figure}
\end{proof}

\begin{corollary}\label{cor:LPnonconservative}
For fixed $\alpha$, $\LP$ can be solved in polynomial time.
\end{corollary}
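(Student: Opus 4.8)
The plan is to reduce solving $\LP$ to the task already accomplished in Lemma~\ref{lem:sep-LP}, namely the existence of a polynomial-time separation oracle, and then appeal to the ellipsoid method. The key observation is that, although $\LP$ has exponentially many constraints (one for each pair $U \subseteq V$ and $F \subseteq B$ with $|F| = \alpha$ in the second family), it has only $|V|$ variables, one opening variable $y_u$ per vertex. By the equivalence between separation and optimization realized through the ellipsoid method, a linear program on polynomially many variables that admits a polynomial-time separation oracle can be solved in time polynomial in the number of variables and in the bit-complexity of its constraints.

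Concretely, first I would note that the first, third, fourth, and fifth families of constraints are polynomial in number, so they can be checked directly by substitution. For the second, exponential family, I would invoke the separation oracle of Lemma~\ref{lem:sep-LP}, which, for fixed~$\alpha$, either confirms that all such constraints hold or returns one that is violated, in polynomial time. Feeding this combined oracle to the ellipsoid method then decides whether the feasible region is empty and, when it is nonempty, produces a feasible point, which is exactly the solution we subsequently want to round.

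A couple of standard technicalities need care. Since $\LP$ is a feasibility problem over the bounded box $[0,1]^{|V|}$ together with the equality $\sum_{u\in V} y_u = k$, the search region is bounded, as the ellipsoid method requires, and all data ($k$, the capacities $L_u$, and the $0/1$ incidence information of $G'$) have polynomial bit-size. The usual difficulty of applying the ellipsoid method to a feasibility problem, namely that the polytope may be lower-dimensional or have vanishing volume, is handled in the standard way by working with a slightly perturbed, full-dimensional relaxation and rounding the resulting interior point, so that either a genuine feasible solution is recovered or infeasibility is correctly certified.

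The main obstacle does not lie in this corollary itself: the substantive work is the construction of the separation oracle in Lemma~\ref{lem:sep-LP}, together with the totally unimodular (max-flow/min-cut) reformulation it relies on. Granting that oracle, the only remaining subtlety is the bookkeeping required to turn a separation routine into an exact polynomial-time LP solver, which is precisely what the ellipsoid method supplies; hence the corollary is essentially an immediate consequence of Lemma~\ref{lem:sep-LP}.
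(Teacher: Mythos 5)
Your argument is correct and is precisely the route the paper intends: Corollary~\ref{cor:LPnonconservative} is stated as an immediate consequence of the separation oracle in Lemma~\ref{lem:sep-LP} via the standard equivalence of separation and optimization (ellipsoid method), with no further proof given in the paper. Your additional remarks on boundedness and bit-complexity are standard bookkeeping and do not change the substance.
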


If, for an arbitrary $\alpha$, there were a polynomial-time algorithm for
finding a set~$F$ with $|F|=\alpha$ that minimizes the value
of~\eqref{eq:min-nonconservative}, then a stronger version of
Corollary~\ref{cor:LPnonconservative} without the restriction on $\alpha$ being
fixed would hold. Unfortunately, as we show in Section~\ref{sec-complex}, such
algorithm only exists if $\PP = \coNP$.

\subsection{Distance-$r$ transfers}

Given a solution $y$ for $\LP$, the problem of finding $k$ centers to serve all
clients is now reduced to rounding vector $y$ so that exactly $k$ vertices are
integrally open. Since the total fractional opening of $y$ is $k$, one might
consider ``moving'' the fractional opening from one vertex to another so that
the opening of some vertices becomes zero, while the opening of $k$ vertices
become one. This idea motivated the distance-$r$ transfers, introduced by
An~\etal~\cite{AnBCGMS15}, and which we adapt to the fault-tolerant context.

In a distance-$r$ transfer, the fractional opening of vertices are moved to
vertices within distance at most $r$. This guarantees that, after performing
transferring operations, the cost of the solution grows in a controlled way. To
ensure that the capacity constraints are not violated, one might consider only
transferring fractional opening from a low capacitated vertex to higher
capacitated vertices, so that the ``local capacity'' does not decrease. Here, we
use a slightly more general definition than the original one to comprise our
requirements, as we might need to ensure that capacities in certain vertices are
never transferred (that is the case for vertices in $B$), and that transfers
follow certain paths.

\begin{definition}
Let $V$ be a set of vertices,\, $B$ be a subset of $V$, \, $H = (W, E)$ be a
graph with $W \subseteq V$, \, $L : W \rightarrow \Rnn$ be a capacity function
on $W$, and $y \in \Rnn^{|V|}$. A vector $y'$ in $\Rnn^{|V|}$ is an
\emph{$H$-restricted distance-$r$ transfer of $y$} if
\begin{enumerate}[(a)]
  \item $\sum_{v \in W} y'_v = \sum_{v \in W} y_v$;

  \item $\sum_{v \in N^r_H(U) \setminus B} L_v y'_v \ge \sum_{v \in U \setminus
  B} L_v y_v$ for every $U \subseteq W$; and

  \item $y'_v = y_v$ for every $v \in (V \setminus W) \cup B$.
\end{enumerate}
\end{definition}

If $y'$ is the characteristic vector of a set $R \subseteq V$, we will say that
$R$ is an \emph{integral $H$-restricted distance-$r$ transfer of $y$}. If $H =
G$, then we simply say that $y'$ is a \emph{distance-$r$ transfer of $y$}.

An~\etal~\cite{AnBCGMS15} reduced the rounding of an arbitrary graph to the case
in which the graph is a tree that satisfies certain properties. They showed that
such trees have integral distance-$2$ transfers. This is formalized in the
following.

\begin{lemma}[\cite{AnBCGMS15}]\label{lem:Anetal}
Let $T = (W, E)$ be a tree with $W \subseteq V$, and $y$ in $[0,1]^{|V|}$ be a
vector such that~$\sum_{u \in W} y_u$ is an integer, and $y_v = 1$ for every
internal node $v$ of~$T$ in~$W$. One can find in polynomial time an integral
($T$-restricted) distance-$2$ transfer of~$y$.
\end{lemma}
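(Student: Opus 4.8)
The plan is to reduce the statement to a rounding problem on the fractional leaves of $T$ and then solve it by a bottom-up consolidation along the tree. First I would root $T$ at an arbitrary node. Condition~(c) pins $y'$ on $(V\setminus W)\cup B$, so these coordinates are not ours to choose; and every internal node $v$ of $T$ in $W$ has $y_v=1$, which I would simply keep (so $y'_v=1$). Hence the only coordinates to decide are the \emph{free leaves}, that is, the leaves of $T$ lying in $W\setminus B$. Since $\sum_{u\in W}y_u$ is an integer and all the already-fixed coordinates (internal nodes, $B$, and any integral leaf) contribute an integer, the total opening carried by the free leaves is an integer $p$; the goal is to open integrally exactly this much opening among the free leaves while maintaining~(a),~(b), and~(c).

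The mechanism I would use exploits two distance facts in a tree: two leaf-children of a common internal node are at distance exactly $2$, and a leaf and its grandparent are at distance exactly $2$. Processing the internal nodes in order of non-increasing depth, at each internal node $v$ I would \emph{consolidate} the openings of its free leaf-children by shifting opening from lower- to higher-capacity children. Each such shift is between siblings, hence a legitimate distance-$2$ move, and, because opening is pushed toward larger $L$, it never decreases the capacity available within distance $2$ of any set. After consolidation, at most one free leaf-child of $v$ remains fractional, carrying the fractional part of the subtree's total opening; I would then route this single residual upward, using leaf-to-grandparent hops of length $2$, merging residuals as they reach common ancestors, until the integrality of the global total forces the last residual to vanish. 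Throughout, internal nodes are left open so that local capacity is preserved. Crucially, the definition asks only that the \emph{final} $y'$ satisfy~(a),~(b),~(c); I never need to track individual units of opening, so composing several hops is harmless, and it suffices to verify the three conditions for the endpoint.

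The verification of~(b) I would carry out by a prefix/monotonicity argument: since within each relevant neighborhood the opened free leaves are exactly the highest-capacity ones, for every $U\subseteq W$ the opened capacity inside $N^2_T(U)\setminus B$ dominates the fractional capacity of $U\setminus B$ under $y$, which is precisely~(b). Conditions~(a) and~(c) are immediate from the construction, since total opening is preserved and fixed coordinates are never touched.

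I expect the main obstacle to be merging residual fractions that sit under \emph{different} parents: two such leaves are at distance at least $4$, so they cannot be exchanged directly, and the upward routing must be arranged so that, simultaneously for \emph{every} set $U$, the capacity available within $N^2_T(U)$ never drops below the required bound. I would make this rigorous in one of two ways. The clean, constructive option is to recast the rounding as an integral flow problem: build a bipartite network whose left nodes encode the capacity demand of~(b), whose right nodes are the free leaves, and whose arcs respect the relation $N^2_T$; the fractional vector $y$ furnishes a feasible fractional flow, and integrality of maximum flow yields an integral $y'$ meeting~(a),~(b),~(c), with the tree structure keeping the network (and hence the running time) polynomial. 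The alternative is a direct potential/charging argument showing that the consolidation and upward routing can always be realized while keeping every neighborhood's capacity non-decreasing. I would present the flow formulation as the backbone of the proof, since it delivers both correctness and the claimed polynomial running time, and use the tree-consolidation picture to convey the intuition.
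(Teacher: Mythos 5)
First, a point of reference: the paper does not prove this lemma at all --- it is imported verbatim from An~et~al.~\cite{AnBCGMS15}, so there is no in-paper argument to compare against. Judged on its own terms, your proposal has a genuine gap, and it is located exactly where you fix the internal nodes. You declare that every internal node keeps $y'_v=1$ and that ``the only coordinates to decide are the free leaves.'' This is false: there are instances satisfying the lemma's hypotheses in which \emph{every} valid integral distance-$2$ transfer must close an internal node. Take $T$ to be the path $u_1-v_1-v_2-u_2$ with $B=\emptyset$, $y_{u_1}=y_{u_2}=\tfrac12$, $y_{v_1}=y_{v_2}=1$ (so the sum is $3$ and both internal nodes are at $1$), and capacities $L_{u_1}=L_{u_2}=100$, $L_{v_1}=L_{v_2}=1$. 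If both internal nodes stay open, exactly one leaf, say $u_2$, must be closed; then condition~(b) for $U=\{u_2\}$ requires $\sum_{v\in N^2_T(u_2)}L_v y'_v\ge 50$, but $N^2_T(u_2)\setminus\{u_2\}=\{v_1,v_2\}$ supplies only $2$. The unique remedy is to open $\{u_1,v_2,u_2\}$ (or symmetrically close $v_2$), realized by the transportation plan that sends $\tfrac12$ from $v_1$ to each of $u_1$ and $u_2$ --- moves of length at most $2$ onto larger capacities. Your framework can never produce this solution. The same example also defeats your residual-routing step: the two residuals of $\tfrac12$ sit under different parents at distance $3$, the grandparents you propose to route them to are already at value $1$ and cannot absorb mass, and the ``last residual'' does not vanish --- it sums to a full unit that has nowhere admissible to land among the leaves.

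The actual mechanism in An~et~al.\ is precisely the one your setup forecloses: processing internal nodes bottom-up, when the fractional children of $v$ sum to $m+r$ with $0<r<1$, the node $v$ donates $1-r$ of its own opening to round the $(m{+}1)$-st child up to $1$, and $v$ itself then carries the leftover fraction $r$ upward as the single fractional element of its subtree; since that residual originates at $v$, it may still travel distance $2$ (to siblings of $v$ or to $v$'s parent's other children) without any unit of mass moving more than $2$ in total. Your remark that ``composing several hops is harmless'' because only the endpoint must satisfy (a)--(c) is exactly backwards as a proof strategy: the only known way to verify~(b) for all $2^{|W|}$ sets $U$ simultaneously is to exhibit a transportation plan in which every positive flow $f(a,b)$ has $d_T(a,b)\le 2$ and $L_b\ge L_a$, and composed hops break that invariant. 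Finally, the max-flow ``backbone'' is not a proof: condition~(b) is an exponential family of covering constraints, you do not specify the network whose integral points are the integral transfers, and feasibility of the fractional point $y$ in an LP does not yield an integral point unless the polytope is shown to be integral --- which is the entire content of the lemma.
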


For a given solution $y$ for $\LP$ and any failure scenario $F \subseteq B$, the
LP implicitly defines an assignment of clients to non-failed (fractionally
opened) centers at distance~$1$ in~$G'$. Suppose some portion of the opening
$y_v$ of~$v \notin B$ is transfered to some other vertex $v'$ at distance~$r$ in
$G$. If a client $u$ is initially served by $v$, then the assignment can be
transfered to $v'$ as well, so that~$u$ will be (fractionally) assigned to
centers at distance at most $r+1$ in~$G$, as~$(u, v) \in E[G]$. If client~$u$
was initially served by some $v \in B$, then this assignment may be left
unchanged, as no opening of~$v$ is transfered; in this case, however, we might
have $(u, v) \not\in E[G]$, and so edge~$(u,v)$ in~$G'$ may correspond to a path
of length~$4$ in~$G$. The worst case of the obtained assignment happens when the
distance is the maximum between $r+1$ and~$4$.

\subsection{The algorithm}

Our algorithm consists of two parts. In the first, we round a fractional
solution $y$ of $\LP$, and obtain a set $R$ of $k$ centers. In the second part,
for each failure scenario $F \subseteq R$ with $|F| \leq \alpha$, we have to
obtain an assignment from $V$ to $R \setminus F$.

\paragraph{Rounding.}

Since we have pre-opened centers, we round only the residual set of vertices $V
\setminus B$. This phase is based on the algorithm of An~\etal~\cite{AnBCGMS15}
for the capacitated (non-fault-tolerant) $k$-center.
The main difference is that we do not allow transfers from or to vertices in the
set~$B$. The algorithm reduces the problem of rounding a general graph to the
problem of rounding tree instances. There are three consecutive transfers. In
the first step, we concentrate one unit of opening on one auxiliary vertex that
is added at the same location as the cluster midpoint.
In the second step, we create a tree instance using the auxiliary vertices as
internal nodes, and obtain an integral transfer using Lemma~\ref{lem:Anetal}. In
the last step, the opening of auxiliary vertices is transferred back to vertices
of the original graph. A detailed description is presented in the following:

\begin{enumerate}[Step 1.]
  \item For each cluster $C_v$, choose an element $m_v$ in the neighborhood of
  the midpoint~$v$ that is not pre-opened, and has the largest capacity, that
  is, $m_v = \argmax_{u \in N_G(v) \setminus B} L_u$. Create an auxiliary
  vertex $a_v$ at the same location as $v$ (add an edge to $a_v$ from each
  element of $N(v)$ as in Figure~\ref{fig:aux}), with capacity
  $L_{a_v}=L_{m_v}$, and initial opening~$y_{a_v} = 0$. Next, aggregate one
  unit of opening to~$a_v$ by transferring fractional openings from
  $N_G(v)\setminus B$ to~$a_v$. This can be done as $\sum_{u \in N_{G}(v)
  \setminus B} y_u \ge 1$. The transfer proceeds as follows: for each $u$ in
  $N_G(v) \setminus B$, decrease~$y_u$, while increasing $y_{a_v}$,
  until~$y_u$ becomes~$0$. The process is interrupted once~$y_{a_v}$
  reaches~$1$. The result is a distance-$1$ transfer~$\y{1}$. The first vertex
  to have its fractional opening transferred is $m_v$, so that, at the end of
  this step, $\y{1}_{m_v} = 0$.

  \item Obtain a tree $T$ from the clustering tree by replacing each
  midpoint~$v$ with $a_v$ for every~$v$ in $\Gamma$. Next, for each cluster
  $C_v$, select every vertex~$u$ in $C_v$ such that $0 < \y{1}_u < 1$ and add
  a leaf corresponding to~$u$, connected to~$a_v$. Finally, apply
  Lemma~\ref{lem:Anetal}, and obtain an integral $T$-restricted distance-$2$
  transfer $\y{2}$ (starting with $\y{1}$). Notice that $d_G(w_1, w_2) = 3$
  for each edge $(w_1, w_2)$ of $T$ if both $w_1$ and $w_2$ are internal
  nodes; and $d_G(w_1, w_2) \le 2$ if either~$w_1$ or~$w_2$ is a leaf. Hence,
  $\y{2}$ can be interpreted as a distance-$(2 \cdot 3)$ transfer of~$\y{1}$
  (on the graph~$G$).

  \item For each cluster $C_v$, transfer the opening of the auxiliary vertex
  $a_v$ back to the original vertex $m_v$. This is possible since $\y{2}_{m_v}
  = 0$. Obtain a final integral distance-$1$ transfer~$\y{3}$. Open the set of
  vertices $R \subseteq V$ that corresponds to the characteristic
  vector~$\y{3}$.
\end{enumerate}

\begin{figure}
  \centering
  \scalebox{1.2}{\def\svgwidth{2.8cm}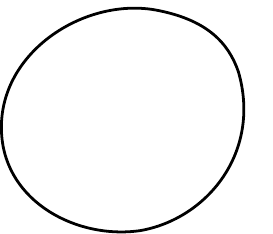}
  \caption{Auxiliary vertex $a_v$ is adjacent to each vertex in $N(v)$. The
  white vertices represent set $B_v$.\label{fig:aux}}
\end{figure}

\paragraph{Assignment.}

After opening centers $R$, up to $\alpha$ failures might occur. Our algorithm
must provide a valid assignment for each failure scenario $F \subseteq R$. We
will consider two cases, depending on whether $F \subseteq B$.

First, we examine the case that $F$ is a subset of $B$. In this case, $\LP$
assures the existence of an assignment from $V$ to a set of fractionally opened
centers that does not intersect $F$. Since the rounding algorithm obtains an
integral $G$-restricted distance-$8$ transfer (by adding up the three
consecutive transfers), this will lead to a distance-$9$ solution that does not
assign to any element of $F$.

For the case that $F$ is not a subset of $B$, we may not rely on the existence
of a fractional assignment obtained from the LP. Instead, we will show that, for
each~$F$, there exists a corresponding $F' \subseteq B$, and that a distance-$9$
solution for failure scenario~$F'$ can be transformed into a distance-$10$
solution for failure scenario~$F$. Indeed, we will show that each element $u$
assigned to a center $v \in F \setminus F'$ in the former solution may be
reassigned to a distinct element $v' \in F' \setminus F$ in the latter solution,
such that $v$ and $v'$ are in the same cluster, and $L_{v'} \ge L_v$.

A naive analysis of the preceding strategy would yield a $13$-approximation, as
the distance between $v$ and $v'$ might be $4$, and thus $d(u,v') \le d(u,v) +
d(v,v') \le 9 + 4$.
To obtain a more refined analysis, we will bound the distance between $u$ and
the midpoint associated to $v$. More precisely, denote by $\delta(v)$ the
midpoint of the cluster that contains $v$.
We obtain the following lemma.

\begin{lemma}\label{lemma:assign}
Consider $F \subseteq B$ with $|F| = \alpha$ and let $R$ be the integral
transfer obtained from $y$ by the rounding algorithm above. One can find, in
polynomial time, an assignment $\phi : V \rightarrow R\setminus F$ such that
$d_G(u, \phi(u)) \le 9$ and $d_G(u, \delta(\phi(u))) \le 8$ for each~$u$ in~$V$.
\end{lemma}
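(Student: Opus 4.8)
The plan is to prove Lemma~\ref{lemma:assign} by exhibiting a concrete assignment that respects capacities and meets both distance bounds, using the structure that the rounding produced. Since $F \subseteq B$, program $\LP$ guarantees that there is a fractional assignment from $V$ to non-faulty fractionally-opened centers at distance~$1$ in $G'$ that avoids $F$; I would first record this as the starting point. The rounding produced an integral $G$-restricted distance-$8$ transfer $\y{3}$ (the composition of the three distance-$1$, distance-$6$, distance-$1$ transfers from Steps~1--3), and by property~(c) of an $H$-restricted transfer no opening of $B$ is ever moved, so every vertex of $F$ still has $\y{3}=1$ but, crucially, property~(b) ensures the ``local capacity'' needed for a Hall-type assignment survives the rounding. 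My first step is therefore to combine the LP-feasible fractional assignment with the transfer to obtain, via Hall's condition, an integral assignment $\phi: V \to R\setminus F$ at distance at most $8+1=9$ in $G$.

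The heart of the matter is to upgrade the crude distance-$9$ bound into the refined claim $d_G(u,\delta(\phi(u)))\le 8$. First I would verify feasibility: for each $U\subseteq V$ I want $|U|\le \sum_{u\in N^{?}_G(U)\setminus F} \y{3}_u L_u$ where the neighborhood radius accounts for the transfer distance, and I would derive this by chaining the $\LP$ constraint ($|U|\le\sum_{u\in N_{G'}(U)\setminus F}y_uL_u$) with transfer property~(b) applied to the set of centers serving $U$. The subtlety is that an edge of $G'$ may correspond to a path of length~$4$ in $G$ when a client was served through an element of $B$; I would track this carefully so that the accumulated radius is exactly~$9$, and no worse. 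Once the Hall condition holds at radius~$9$, a maximum matching in the bipartite client--capacity graph yields $\phi$, computable in polynomial time.

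For the $\delta$-bound, the key geometric fact is Lemma~\ref{lem:monarchtree}: every client $u$ lies in a cluster $C_w$ with $d_G(u,w)\le 2$, and $N_G(w)\subseteq C_w$. I would argue that the center $\phi(u)$ to which $u$ is assigned lies in (or very near) the cluster reachable within the transfer distance from the fractional center that originally served $u$, and that the auxiliary vertices $a_v$ sit exactly at the midpoints $v$. Because Step~2 operates on the tree $T$ whose internal nodes are the midpoints and whose edges have $G$-length~$3$, while Steps~1 and~3 move opening only within $N_G(v)$ (distance~$1$ from the midpoint), the opening that ends up at $\phi(u)$ traces back to a midpoint at controlled distance from $u$. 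Assembling these pieces so that the total from $u$ to $\delta(\phi(u))$ is at most~$8$ rather than~$9$ is where I expect the main obstacle: I would need to show that the final hop from the midpoint $\delta(\phi(u))$ to $\phi(u)$ (length at most~$1$, since $\phi(u)\in N_G(\delta(\phi(u)))$ or the nearby landing vertex $m_v$) is precisely the slack that lets me shave one unit off the client-to-center distance. I would therefore prove the stronger intermediate statement $d_G(u,\delta(\phi(u)))\le 8$ directly and recover $d_G(u,\phi(u))\le d_G(u,\delta(\phi(u)))+1\le 9$ as a corollary, rather than bounding the two quantities independently.
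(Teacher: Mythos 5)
Your overall strategy coincides with the paper's: chain the constraints of $\LP$ (for the scenario $F \subseteq B$) with property~(b) of the three successive transfers to establish a Hall-type condition on a bipartite client--capacity graph, extract a matching to get $\phi$, and then use the cluster/tree geometry of Lemma~\ref{lem:monarchtree} to bound $d_G(u,\delta(\phi(u)))$. Your handling of the $G'$-versus-$G$ subtlety (arcs into $B$ may be length-$4$ paths in $G$, but $B$-openings are never moved) is also the paper's; the paper makes this precise by carrying along sets $\X{i}(U) $ that keep $N_{G'}(U)\cap B$ fixed while expanding only the non-$B$ part by the transfer radii.

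The step that would fail is your final assembly: you propose to prove $d_G(u,\delta(\phi(u)))\le 8$ first and then \emph{deduce} $d_G(u,\phi(u))\le d_G(u,\delta(\phi(u)))+1\le 9$. That inequality is false in general, because a center need not be adjacent to its cluster midpoint: Lemma~\ref{lem:monarchtree} only guarantees $d_G(w,v)\le 2$ for $w\in C_v$, and indeed $\phi(u)$ can be an element of $B_v$ or a leaf of $T$ at distance exactly $2$ from $v=\delta(\phi(u))$. Your generic corollary therefore only yields $d_G(u,\phi(u))\le 10$. The paper instead runs a single case analysis on where $\phi'(u)$ sits in $\X{2}(\{u\})$ and bounds the two distances \emph{in parallel}, exploiting an anti-correlation: in the configurations where $d_G(u,\delta(\phi'(u)))$ attains $8$, the assigned vertex is the auxiliary $a_v$ (later replaced by $m_v\in N_G(v)$), which really is within distance $1$ of its midpoint; and in the configurations where the center is $2$ away from its midpoint (elements of $B_v$, leaves of $T$), the path from $u$ to that midpoint is correspondingly shorter, so both bounds still hold. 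Note that your earlier, cruder derivation of the $9$-bound --- composing the distance-$1$, distance-$6$ and distance-$1$ transfers into a distance-$8$ transfer and adding the radius-$1$ service, taking the maximum with the length-$4$ paths into $B$ --- is valid and suffices for that half of the lemma; it is only the ``$+1$ from the midpoint'' shortcut that must be abandoned in favor of the case analysis.
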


\begin{proof}
\newcommand{\ic}[1]{\textit{ic}^{\scriptscriptstyle(#1)}\!}

Let $\bar{V} = V \cup \{a_v : v \in \Gamma\}$ be the union of the vertices of
$G$ and the auxiliary vertices, and $\bar{G}$ be the graph obtained after we add
the auxiliary vertices to $G$. Fix a subset $U \subseteq V$.

Recall that the rounding algorithm considers an initial feasible solution ${y =
\y{0}}$, and obtains consecutive transfers $\y{1}, \y{2}, \y{3}$. In the
following, for each transfer $\y{i}$, for $0 \le i \le 2$, and each $U\subseteq
V$, we will consider a set $X = \X{i}(U)$ of vertices, excluding faulty
elements, whose total installed capacity exceeds~$|U|$. That is, we want to
obtain $X$ such that the value $\ic{i}(X) := \sum_{u \in X \setminus F} \y{i}_u
L_u \ge |U|$.
Initially, before any transfer is performed, we have that $\y{0} = y$ and, by
the constraints of~$\LP$, we have that $|U| \le \sum_{u \in N_{G'}(U) \setminus
F} y_u L_u$, so we set $\X{0}(U) = N_{G'}(U)$.

In the first step, we have a distance-$1$ transfer. Notice that ${N_{G'}(U)
\setminus B = N_{G}(U) \setminus B}$. Also, recall that the transfer is
restricted to vertices in $\bar{V}\setminus B$. We obtain
\begin{align*}
|U| \le \ic{0}(\X{0}(U))
  & = \ic{0}((N_{G'}(U) \cap B) \cup (N_{G}(U) \setminus B)) \\
  & \le \ic{1}((N_{G'}(U) \cap B) \cup (N^2_{G}(U) \setminus B)).
\end{align*}
Hence we set $\X{1}(U) = (N_{G'}(U) \cap B) \cup (N^2_{G}(U) \setminus B)$.

In the second step, we have an integral $T$-restricted distance-$2$ transfer
of~$\y{1}$. Once again, since $T$ does not include vertices of $B$, we obtain
\begin{align*}
\ic{1}(\X{1}(U))
  & = \ic{1}((N_{G'}(U) \cap B) \cup (N^2_{G}(U) \setminus B)) \\
  & \le \ic{2}( (N_{G'}(U) \cap B)\cup N^2_T(N^2_{G}(U) \setminus B)).
\end{align*}
We set $\X{2}(U) = (N_{G'}(U) \cap B)\cup N^2_T(N^2_{G}(U) \setminus B)$.

\medskip

Let $\bar{R}  \subseteq \bar{V}$ be the set corresponding to vector $\y{2}$.
First consider a bipartite graph ${H = (V \cup \bar{R}, D)}$ with $\{u, v\}$ in
$D$ if $v \in \X{2}(\{u\}) \setminus F$.
Then modify $H$ by including ${L_v-1}$ additional copies of each vertex $v$ in
$\bar{R}$. Notice that now, for each $U \subseteq V$, we have~$|N_H(U)| =
|\bigcup_{u \in U} N_H(\{u\})| = \ic{2}(\bigcup_{u \in U} \X{2}(\{u\})) =
\ic{2}(\X{2}(U)) \ge |U|$. This is exactly Hall's condition for the existence of
a matching in $H$ covering $V$. We obtain such a matching in polynomial time,
and obtain a corresponding assignment $\phi' : V \rightarrow \bar{R}$.

Now, for every vertex $u$ in $V$, we show that the distance from $u$ to
$\delta(\phi'(u))$ is bounded by~$8$. Recall that~$\phi'(u) \in N_H(\{u\}) =
\X{2}(\{u\})$. We have two cases. First, suppose that $\phi'(u) \in
N_{G'}(\{u\}) \cap B$. Since we have that $d_G(u, \phi'(u)) \le 4$, by the
construction of $G'$, we obtain that $d_G(u, \delta(\phi'(u))) \le 6$, and we
are done. Now, assume that $\phi'(u) \in N^2_T(N^2_{G}(\{u\}) \setminus B)$. In
this case, there must be some $v$ in $N^2_{G}(\{u\})\setminus B$ and a shortest
path $\rho$ connecting $v$ to~$\phi'(u)$ in~$T$.
We consider two possibilities. If the length of $\rho$ is $1$, then $\rho =
(v,\phi'(u))$, and we deduce that $d_G(u, \delta(\phi'(u))) \le {d_G(u, v) +
d_G(v, \phi'(u)) +  d_G(\phi'(u), \delta(\phi'(u)))}  \le  {2 + 3 + 2  = 7}$. If
the length of $\rho$ is $2$, then there exists~$w$ such that $\rho = (v, w,
\phi'(u))$, and we get that  $d_G(u, \phi'(u)) \le {d_G(u, v) + d_G(v, w) +
d_G(w,\phi'(u))} \le {2 + 3 + 3 = 8}$. If $\phi'(u)$ is an internal node of $T$,
then $\delta(\phi'(u)) = \phi'(u)$, and thus $d_G(u, \delta(\phi'(u))) \le 8$.
Otherwise, $w$ must be an internal node and $\phi'(u)$ a leaf of $w$. Hence
$\delta(\phi'(u)) = w$, and therefore $d_G(u, \delta(\phi'(u))) \le {d_G(u,
\phi'(u)) \le 8}$.

A similar analysis also allows us to deduce that $d_G(u, \phi'(u)) \le 8$. To
obtain a final assignment $\phi : V \rightarrow R$, we reassign each vertex $u$
assigned to an auxiliary vertex $a_v$, to the vertex~$m_v$, that is, for each
$u$ in $V$, if $\phi'(u) = a_v$ for some $v$ in~$\Gamma$, then set $\phi(u) =
m_v$, otherwise set~$\phi(u) = \phi'(u)$.
\end{proof}

Now we may obtain the approximation factor.

\begin{theorem}\label{thm:approx}
There exists a $10$-approximation for the capacitated $\alpha$-fault-tolerant
$k$-center with fixed $\alpha$.
\end{theorem}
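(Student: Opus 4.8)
The plan is to invoke the bottleneck reduction from Section~\ref{sec:prel}: it suffices to exhibit a polynomial-time algorithm that, on a connected unweighted graph, either certifies that no distance-$1$ solution exists or outputs a set $R$ of exactly $k$ centers tolerant to $\alpha$ failures at distance~$10$. First I would run the whole pipeline set up above: apply Lemma~\ref{lem:monarchtree} to obtain the clustering $\{C_v\}_{v\in\Gamma}$, select the pre-opened set $B=\cup_{v}B_v$ (the $\alpha$ largest-capacity vertices per cluster), build $G'$, and solve $\LP$, which is possible in polynomial time for fixed $\alpha$ by Corollary~\ref{cor:LPnonconservative}. If $\LP$ is infeasible, then by the contrapositive of Lemma~\ref{lem:feasible} the program $\ILP$ is infeasible, so no distance-$1$ solution exists and I return that certificate. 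Otherwise I take a feasible $y$ and run the three-step rounding to obtain the integral transfer $\y{3}$ and the set $R$; since each transfer preserves the total opening and the first constraint of $\LP$ forces $\sum_u y_u=k$, I get $|R|=k$.

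Then I need to show $R$ tolerates every $F\subseteq R$ with $|F|\le\alpha$ at distance~$10$. I would first reduce to $|F|=\alpha$: since $\alpha<k=|R|$, any smaller $F$ can be padded to a set $\hat F\subseteq R$ of size $\alpha$, and an assignment into $R\setminus\hat F$ is a fortiori valid for $F$. For $F\subseteq B$, Lemma~\ref{lemma:assign} directly supplies an assignment $\phi:V\to R\setminus F$ with $d_G(u,\phi(u))\le 9\le 10$, so this case is immediate.

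The substantive case is $F\not\subseteq B$, and this is where I expect the main obstacle. Here I would construct a surrogate scenario $F'\subseteq B$ with $|F'|=\alpha$ cluster by cluster: inside each cluster $C_w$ I keep the failures already in $B_w$ and replace each $v\in (F\cap C_w)\setminus B_w$ by a distinct substitute $\sigma(v)=v'\in B_w\setminus F$. A counting argument shows enough substitutes exist, because $|B_w|=\alpha\ge |F\cap C_w|$; moreover, since $B_w$ holds the largest-capacity vertices of $C_w$, every substitute satisfies $L_{v'}\ge L_v$. Applying Lemma~\ref{lemma:assign} to $F'$ yields $\phi:V\to R\setminus F'$ with $d_G(u,\phi(u))\le 9$ and, crucially, $d_G(u,\delta(\phi(u)))\le 8$. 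I then define $\phi_F$ by rerouting every client with $\phi(u)=v\in F\setminus F'$ to $\sigma(v)=v'$ and leaving all others fixed; note $F\setminus F'\subseteq F\setminus B$ because $F\cap B\subseteq F'$, so each such $v$ really has a substitute. Capacities are respected since $\phi$ assigns nothing to $v'\in F'$ and $|\phi^{-1}(v)|\le L_v\le L_{v'}$, and the map $v\mapsto v'$ is injective.

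Finally I would check the distances, which is the step that separates the factor $10$ from the naive $13$. Unrerouted clients keep $d_G(u,\phi_F(u))\le 9$. For a rerouted client, $v$ and $v'=\sigma(v)$ lie in the same cluster, so $\delta(v)=\delta(v')$ is a common midpoint $w$; the refined bound gives $d_G(u,w)=d_G(u,\delta(\phi(u)))\le 8$, and Lemma~\ref{lem:monarchtree} gives $d_G(w,v')\le 2$, whence $d_G(u,\phi_F(u))\le 8+2=10$. Thus every failure scenario admits a capacity-feasible distance-$10$ reassignment, so $R$ is a distance-$10$ fault-tolerant solution, and with the bottleneck reduction this yields the claimed $10$-approximation. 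The crux of the argument is the simultaneous guarantee in the case $F\not\subseteq B$ that the substitution $\sigma$ is injective, capacity-monotone, and confined to a single cluster, so that the refined midpoint bound of Lemma~\ref{lemma:assign} can be spent only once per rerouted client.
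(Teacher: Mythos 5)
Your proposal is correct and follows essentially the same route as the paper: round $\LP$ to get $R$, handle $F\subseteq B$ via Lemma~\ref{lemma:assign}, and for general $F$ build a surrogate $F'\subseteq B$ by swapping each failed non-backup center for a same-cluster backup center of no smaller capacity, then reroute its clients and charge the distance as $d_G(u,\delta(\phi(u)))+2\le 8+2$. The only (immaterial) difference is that the paper takes $F'_v$ to be the $|F_v|$ most capacitated vertices of $B_v$ rather than retaining $F\cap B_v$ and substituting only the rest, but both choices yield the same injective, capacity-monotone, intra-cluster substitution.
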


\begin{proof}
Consider a failure scenario $F \subseteq R$ with $|F| = \alpha$. For each
cluster $C_v$, let $F_v$ be the set of centers that failed in cluster $C_v$.
Also, let $F'_v$ be the set of the $|F_v|$ most capacitated centers in~$B_v$,
and $F' = \bigcup_{v \in \Gamma} F'_v$. We use Lemma~\ref{lemma:assign}, and
obtain an assignment $\phi' : V \rightarrow R \setminus F'$.
Now, for each $v$ in $\Gamma$, obtain an ordering $\{u_1,\ldots, u_t\}$ of the
vertices in~$F'_v \setminus F_v$, and an ordering $\{v_1,\ldots, v_t\}$ of the
vertices in~$F_v \setminus F'_v$.
For each $w$ that is assigned to~$v_i$, for some $1 \le i\le t$, reassign it to
$u_i$, that is, for every $w$ such that~$\phi'(w) = v_i$, set $\phi(w) = u_i$.
Notice that this leads to a valid assignment $\phi$, since $L_{u_i} \ge L_{v_i}$
for every $1 \le i \le t$. Also, we notice that since $u_i$ and $v_i$ are in the
same cluster, $d_G(\delta(v_i), u_i) \le 2$, and thus $d_G(w, \phi(w)) \le
d_G(w, \delta(\phi'(w))) + d_G(\delta(\phi'(w)), u_i)  \le 8 + 2  = 10$.
\end{proof}

\section{$\{0,L\}$-Capacitated fault-tolerant $k$-center}
\label{sec:aprox0L}

For a given $L$, the $\{0,L\}$-capacitated fault-tolerant $k$-center is the
particular version of the capacitated fault-tolerant $k$-center in which every
vertex has capacity either zero or $L$. Vertices with capacity~$0$ are called
\emph{$0$-vertices} and vertices with capacity $L$ are called
\emph{$L$-vertices}. For a given set $A$ of vertices, we denote by~$A^L$ the set
containing all~$L$-vertices of~$A$.

\subsection{LP-formulation}

We give a rounding algorithm for the $\{0,L\}$-capacitated case. As in
Section~\ref{sec:noncons}, we formulate the problem using $\ILP$. In this case,
however, we may rewrite the program such that only $L$-vertices appear in the
summation, and all coefficients are equal, that is, $\ILP$ can be written as:

\begin{equation*}
\begin{array}{r@{$\,\,$}l@{\qquad}l}
\sum_{u\in V}y_u  =  &  k   &  \\
| U | \le & \sum_{u \in (N_G(U))^L \setminus F} y_u L
          & \forall \; U \subseteq V, \; F \subseteq V : |F| \le \alpha\\
  y_u \in & \{ 0 , 1 \}
          & \forall \; u \in V.
\end{array}
\end{equation*}

Notice that the second line in the program above can be simplified. The key
observation is that, in the worst case, the total failed capacity is always the
constant $\alpha L$. Indeed, consider a feasible integer solution $y$ and a
fixed subset $U \subseteq V$ such that $U \ne \emptyset$, and let ${H = \{ u \in
(N_G(U))^L : y_u =  1\}}$. We have $|H| > \alpha$, since otherwise we would get
$
{|U| \le \sum_{u \in (N_G(U))^L \setminus H} y_u L =}
{\sum_{u \in (N_G(U))^L\setminus H} 0 \cdot L = 0},
$
that is a contradiction since $U$ is not empty.
Let $F'$ be any subset of $H$ with $|F'| = \alpha$. From
the inequality constraint in $\ILP$ for $F = F'$,
we obtain
\begin{align*}
\textstyle
|U| \le \sum_{u \in (N_G(U))^L \setminus F'} y_u L
    &\textstyle=    \sum_{u \in (N_G(U))^L} y_u L
                   -\sum_{u \in (N_G(U))^L \cap F'} 1\cdot L\\
    &\textstyle=    \sum_{u \in (N_G(U))^L} y_u L
                   -\alpha L.
\end{align*}

Therefore, the following linear program, that is denoted by $\LPUU$, is a
relaxation of $\ILP$.
\begin{equation*}
\begin{array}{r@{$\,\,$}l@{\qquad}l}
\sum_{u\in V}y_u  =  &  k   &  \\
| U | \le & \sum_{u \in (N_{G}(U))^L} y_u L - \alpha L
          & \forall \; U \subseteq V, \; U \neq \emptyset \\
1     \le & \sum_{u \in (N_{G}(v))^L} y_u
          & \forall \; v \in V \\
  0 \le  y_u \le & 1
          & \forall \; u \in V.
\end{array}
\end{equation*}

In contrast to~$\LP$, this program can be separated even if~$\alpha$ is part of
the instance. The difference is that, in this formulation, the failure scenarios
need not be enumerated. Given a candidate solution~$y$, we can compute the
minimum value of $\sum_{u \in N^L(U)} y_u L - |U|$ over all sets $U$, and check
whether this value is at least $\alpha L$. This can be done in polynomial time
using a max-flow min-cut algorithm with arguments very similar to those in the
proof of Lemma~\ref{lem:sep-LP}. This means that we can separate $\LPUU$ in
polynomial time, which implies the following lemma.

\begin{lemma}
$\LPUU$ can be solved in polynomial time even if~$\alpha$ is part of the input.
\end{lemma}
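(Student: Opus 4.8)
The plan is to exhibit a polynomial-time separation oracle for $\LPUU$ and then invoke the ellipsoid method, which solves any linear program with polynomially-bounded coefficient encoding as soon as such an oracle is available. Three of the four constraint families are harmless: the single equality $\sum_u y_u = k$, the $|V|$ neighborhood constraints $1 \le \sum_{u \in (N_{G}(v))^L} y_u$, and the box constraints $0 \le y_u \le 1$ are polynomial in number, so each can be checked directly and reported if violated. The only difficulty is the exponentially large second family, namely $|U| \le \sum_{u \in (N_{G}(U))^L} y_u L - \alpha L$ over all nonempty $U \subseteq V$.

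The crucial observation is that, unlike in $\LP$, no failure set appears inside the summation: the term $\alpha L$ is a constant that can be pulled out of the minimization. Hence, given a candidate $y$, it suffices to compute
\[
m \ = \ \min_{U \subseteq V,\ U \ne \emptyset}\ \Bigl(\sum_{u \in (N_{G}(U))^L} y_u L - |U|\Bigr)
\]
and to test whether $m \ge \alpha L$; a minimizer $U^*$ attaining a value below $\alpha L$ gives the violated inequality to return. This is exactly why separation succeeds even when $\alpha$ is part of the input: $\alpha$ enters only through the numeric threshold $\alpha L$, so, in contrast to the oracle of Lemma~\ref{lem:sep-LP}, there is no need to enumerate $O(|V|^\alpha)$ failure scenarios.

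To evaluate the minimization I would reuse the flow construction from the proof of Lemma~\ref{lem:sep-LP}. Build a network with source $s$ and sink $t$, a unit-capacity arc from $s$ to each $v \in V$ (so that the $V$-vertices left on the source side encode $U$), an uncapacitated arc from each $v$ to every $L$-vertex in $N_{G}(v)$, and an arc of capacity $y_u L$ from each $L$-vertex $u$ to $t$. As there, the corresponding integer program has a totally unimodular constraint matrix, so its relaxation is integral and a minimum cut is found in polynomial time; the value of a minimum $s$--$t$ cut equals $|V| + \min_{U}\bigl(\sum_{u \in (N_{G}(U))^L} y_u L - |U|\bigr)$.

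The step I expect to be the main obstacle is the restriction to \emph{nonempty} $U$. The cut associated with $U = \emptyset$ always has value $|V|$, contributing objective $0$; since $\alpha L \ge 0$, the unrestricted minimum never exceeds $0$ and so cannot by itself certify $m \ge \alpha L$ once $\alpha L > 0$. To obtain the nonempty minimum I would force one vertex into $U$ at a time: for each $w \in V$ make the arc $s \to w$ uncapacitated (equivalently, contract $w$ onto $s$), recompute the minimum cut, and take the smallest value over the $|V|$ choices of $w$, recovering the witness $U^*$ from the corresponding cut. This is $|V|$ max-flow computations, still polynomial. Combining the three cheap checks with this oracle yields a polynomial separation oracle for $\LPUU$, and the ellipsoid method then solves the program in polynomial time, with no dependence on $\alpha$ beyond the value of $\alpha L$.
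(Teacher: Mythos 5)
Your proposal is correct and follows essentially the same route as the paper: both separate $\LPUU$ in polynomial time by observing that $\alpha$ enters only through the constant threshold $\alpha L$, so a max-flow/min-cut computation as in Lemma~\ref{lem:sep-LP} replaces the enumeration of failure scenarios, and the ellipsoid method finishes the job. Your extra step of forcing one vertex into $U$ (running $|V|$ min-cut computations) to respect the restriction to nonempty $U$ is a legitimate refinement of a detail that the paper's brief argument glosses over.
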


\subsection{Rounding}

For the non-fault-tolerant $\{0,L\}$-capacitated $k$-center,
An~\etal~\cite{AnBCGMS15} perform an additional preprocessing of the input graph
to obtain a clustering with stronger properties. This way, they derive an
integral distance-$5$ transfer. Namely, before the preprocessing described in
Section~\ref{sec:prel}, which produces an unweighted connected graph~$G=(V,E)$,
they remove any edge connecting two $0$-vertices. We apply their rounding
algorithm to the solution obtained for $\LPUU$, obtaining the following result.

\begin{lemma}\label{lemma-antransf}
Suppose $G$ is a connected graph such that each vertex is either a $0$-vertex or
an $L$-vertex, no two adjacent vertices are $0$-vertices, and $y$ is a feasible
solution for $\LPUU$. Then there is a polynomial-time algorithm that produces an
integral distance-$5$ transfer $y'$ of $y$.
\end{lemma}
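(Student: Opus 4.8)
The plan is to reduce the $\{0,L\}$-fault-tolerant rounding problem to the already-established result of An~\etal~\cite{AnBCGMS15} for the non-fault-tolerant $\{0,L\}$-capacitated $k$-center. The key observation is that, after the preprocessing that removes edges between $0$-vertices, the distance-$5$ transfer of An~\etal depends only on two ingredients: the structural property that no two adjacent vertices are $0$-vertices, and a feasibility condition on $y$ that guarantees ``enough local capacity''. The hypotheses of the lemma supply both ingredients directly, so the real task is to verify that a feasible solution $y$ for $\LPUU$ satisfies precisely the local-capacity condition that their algorithm requires as input.

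First I would recall that the transfer of An~\etal\ operates by moving fractional opening only among $L$-vertices, and that it preserves, up to distance $5$, a lower bound on the accumulated capacity available to any subset $U$ of clients. In their non-fault-tolerant setting the relevant precondition is $|U| \le \sum_{u \in (N_G(U))^L} y_u L$ for every nonempty $U$. Here, the second constraint of $\LPUU$ gives the stronger inequality $|U| \le \sum_{u \in (N_G(U))^L} y_u L - \alpha L$, which in particular implies the non-fault-tolerant precondition (since $\alpha L \ge 0$). Thus any feasible $y$ for $\LPUU$ is, a fortiori, a valid input to their rounding procedure.

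Second, I would check the remaining structural hypotheses needed to invoke their distance-$5$ result: $G$ is connected, every vertex is a $0$-vertex or an $L$-vertex, and no two adjacent vertices are both $0$-vertices. These are stated verbatim in the hypothesis of the lemma, matching the preprocessed-graph assumptions under which An~\etal\ establish their transfer. With these in place, applying their polynomial-time algorithm to $y$ yields an integral vector $y'$ that is a distance-$5$ transfer of $y$, that is, $y'$ redistributes the opening only among $L$-vertices, preserves $\sum_u y_u$, and satisfies the distance-$5$ capacity-preservation condition~(b) of the transfer definition (with $B = \emptyset$ and $H = G$).

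The main obstacle, and the only place where genuine care is required, is ensuring that the \emph{stronger} $\LPUU$ constraint (with the extra $-\alpha L$ slack) does not disturb the invariants the An~\etal\ algorithm maintains: their procedure is driven by the capacity bound for $U$, and one must confirm that feeding it an input satisfying a uniformly stronger inequality does not break any step of their rounding, but merely carries the extra slack harmlessly through the transfer. Since the transfer only ever relies on the inequality as a lower bound to redistribute capacity, the additional $-\alpha L$ term is inherited by $y'$ as well, and the distance-$5$ conclusion follows unchanged. I would close by noting that this surplus slack is exactly what will later guarantee that the rounded solution tolerates the failure of any $\alpha$ centers, which is the reason $\LPUU$ encodes it.
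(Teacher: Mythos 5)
Your proposal is correct and matches the paper's justification: the paper states Lemma~\ref{lemma-antransf} without a separate proof, relying exactly on the observation that a feasible solution of $\LPUU$ satisfies the preconditions of the rounding algorithm of An~\etal\ for the non-fault-tolerant $\{0,L\}$-capacitated $k$-center (the Hall-type constraint with the extra $-\alpha L$ slack is only stronger, and the neighborhood-opening and structural hypotheses are supplied verbatim). Your additional remark that the slack is carried through the transfer harmlessly and is what later enables fault tolerance is consistent with how the paper uses the lemma in Theorem~\ref{thm:approx-uniform}.
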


Now we obtain a $6$-approximation the the $\{0,L\}$-capacitated case.

\begin{theorem}\label{thm:approx-uniform}
There exists a $6$-approximation for the $\{0,L\}$-capacitated
$\alpha$-fault-tolerant $k$-center (with $\alpha$ as part of the input).
\end{theorem}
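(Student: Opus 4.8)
The goal is to prove Theorem~\ref{thm:approx-uniform}: a $6$-approximation for the $\{0,L\}$-capacitated $\alpha$-fault-tolerant $k$-center with $\alpha$ part of the input. Following the bottleneck reduction of Section~\ref{sec:prel}, I assume an unweighted connected graph~$G$ is given, and the task is either to certify that no distance-$1$ solution exists, or to produce a distance-$6$ solution. The plan is to mirror the structure of the non-conservative algorithm of Section~\ref{sec:noncons}, but exploit the stronger LP relaxation $\LPUU$ and the better transfer guarantee of Lemma~\ref{lemma-antransf}. First I would solve $\LPUU$ in polynomial time (possible even with $\alpha$ in the input, by the preceding lemma); if it is infeasible, I return \failure, which is correct since $\LPUU$ relaxes $\ILP$ and thus its infeasibility certifies that no distance-$1$ solution exists. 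Otherwise I obtain a fractional solution~$y$.

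\textbf{Rounding.} Next I would apply the preprocessing of An~\etal{} that removes edges between two $0$-vertices, so that the hypotheses of Lemma~\ref{lemma-antransf} hold: the graph is connected, every vertex is a $0$- or $L$-vertex, and no two adjacent vertices are $0$-vertices. Invoking Lemma~\ref{lemma-antransf} on~$y$ yields an integral distance-$5$ transfer~$y'$, whose support is a set~$R$ of exactly~$k$ vertices; these are the opened centers. The key point is that, by the defining property~(b) of a distance-$5$ transfer, for every $U \subseteq V$ we have $\sum_{v \in N^5_G(U)} L\, y'_v \ge \sum_{v \in U} L\, y_v$, so the capacity available within distance~$5$ in~$R$ dominates the fractional capacity that~$y$ placed on~$U$.

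\textbf{Assignment and the distance bound.} The heart of the proof is showing that, for every failure scenario $F \subseteq R$ with $|F| \le \alpha$, there is an assignment $\phi_F : V \to R \setminus F$ respecting capacities with $d_G(u, \phi_F(u)) \le 6$ for all~$u$. I would set up a Hall-type / max-flow argument: build the bipartite incidence graph between clients~$U$ and the $L$-vertices of $R \setminus F$ reachable within distance~$5$, each such center carrying capacity~$L$. The crucial simplification, exactly the one that justified $\LPUU$, is that the total failed capacity is the constant $\alpha L$ regardless of which $\alpha$ centers of~$R$ fail; so the ``$-\alpha L$'' slack already baked into the second constraint of $\LPUU$ exactly compensates for any failure set~$F$. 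Concretely, for each nonempty $U$, the constraint gives $|U| \le \sum_{u \in (N_G(U))^L} y_u L - \alpha L$, and transferring this inequality through~$y'$ via property~(b) yields $|U| \le \sum_{u \in (N^5_G(U))^L \setminus F} y'_u L = L\,|(N^5_G(U))^L \cap R \setminus F|$, which is precisely Hall's condition for the bipartite graph using distance-$5$ neighborhoods. Since each client sits at distance~$1$ from the set~$U$ it activates (an edge of~$G$), the matched center lies within distance $5+1 = 6$, giving the claimed factor.

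\textbf{Main obstacle.} The delicate step is converting the per-$U$ LP inequality, which involves the \emph{fractional} $y$ on the $5$-neighborhood, into the \emph{integral} Hall condition on~$R \setminus F$ after excluding an \emph{arbitrary} failure set~$F$. The distance-$5$ transfer guarantees domination of capacity over $5$-neighborhoods, but I must verify that subtracting an arbitrary $F$ of size $\alpha$ leaves enough, and this is exactly where the constant-$\alpha L$ observation and the explicit $-\alpha L$ term in $\LPUU$ do the work, avoiding any enumeration of scenarios. Care is needed to confirm that the transfer property~(b) composes cleanly with the failure exclusion (since $F \subseteq R$ and capacities are uniform, removing $|F| \le \alpha$ opened centers removes at most $\alpha L$ units), so that Hall's condition survives for \emph{every} $F$ simultaneously; once that is in hand, the standard matching argument closes the proof, and the uniform capacity makes the final distance accounting $5+1=6$ rather than the $\max\{r+1,4\}$ seen in the general case.
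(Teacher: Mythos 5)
Your proposal follows essentially the same route as the paper: solve $\LPUU$, apply the An~\etal{} rounding of Lemma~\ref{lemma-antransf} to get an integral distance-$5$ transfer $R$, and then for each failure scenario $F$ use Hall's condition, with the explicit $-\alpha L$ slack in $\LPUU$ absorbing the at most $\alpha L$ units of capacity lost to $F$, to obtain a capacity-respecting assignment at distance $1+5=6$. The only blemish is notational: the displayed Hall condition should range over $\bigl(N^6_G(U)\bigr)^L$ (the LP places capacity on $N_G(U)$ and the transfer moves it a further $5$), not $\bigl(N^5_G(U)\bigr)^L$, but your surrounding text and final distance accounting already reflect the correct radius.
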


\begin{proof}

Let $y$ be an optimal solution for $\LPUU$, and $y'$ be an integral distance-$5$
transfer of~$y$ obtained by the algorithm of Lemma~\ref{lemma-antransf}. Also,
let $R$ be the set of centers corresponding to the characteristic vector $y'$.
We proceed as in the proof of Lemma~\ref{lemma:assign}. Consider a subset $U
\subseteq V$.
Let $X(U) =\{ v : y_v > 0 \mbox{ and } v \in  (N_{G}(U))^L\}$, and let
$Y(U)\subseteq R$ be the set of integrally opened centers to which we have
transfered fractional opening from $X(U)$. By the constraints of $\LPUU$, and
the fact that $y'$ is an integral transfer, we get
\[
|U| + \alpha L \le \sum_{u \in X(U)} y_u L \le \sum_{u \in Y(U)} y'_u L = |Y(U)|
L.
\]

Now consider a failure scenario $F \subseteq V$ with $|F| = \alpha$. We can
create a bipartite graph (as in Lemma~\ref{lemma:assign}) that connects each
vertex $u \in V$ to vertices $Y(\{u\}) \setminus F \subseteq R$. Using Hall's
condition, we obtain an assignment $\phi : V \rightarrow R\setminus F$ that
respects the capacities. Since $y'$ is a distance-$5$ transfer, we know that
$Y(\{u\}) \subseteq N^6(\{u\})$ for every~$u$, and thus $d(u, \phi(u)) \le
6$.
\end{proof}

\section{The $k$-supplier}
\label{sec-supplier}

In this section, we consider the \emph{$k$-supplier problem}, which is a common
variant of the $k$-center. In this problem, disjoint sets of clients
$\mathcal{C}$ and facility locations $\mathcal{F}$ are given, and one must
select $k$ facilities to serve each of the clients. In the capacitated
fault-tolerant version, each client must be assigned to a facility, even at the
failure of up to $\alpha$ facilities, and the assignment is such that no
facility $u$ is assigned more than $L_u$ clients.
In the following, we show that our algorithms naturally extend to this
generalization, for both the conservative and non-conservative variants.
Table~\ref{tab:summary2} summarizes the obtained factors.

\begin{table}[h]
\renewcommand{\arraystretch}{1.1}
\centering
\begin{tabular}{|@{\ }c@{\ }|@{\ }c@{\ }|@{\ }c@{\ }|@{\ }c@{\ }|}
\hline
\textbf{Version} & \textbf{Capacities} & \textbf{Value of $\alpha$}
& \textbf{Factor} \\
\hline
conservative     & uniform   & given in the input & $7$          \\
conservative     & arbitrary & fixed              & $11+8\alpha$ \\
non-conservative & uniform   & given in the input & $7$          \\
non-conservative & arbitrary & fixed              & $13$         \\\hline
\end{tabular}
\caption{\label{tab:summary2} Summary of the obtained approximation factors for
the $k$-supplier problem.}
\end{table}

As in the case of the $k$-center, we reduce the problem to the case of a
unweighted connected graph $G$ and the objective is to obtain a distance-$1$
solution. For the $k$-supplier, however, we consider only edges between
$\mathcal{C}$ and $\mathcal{F}$, that is, the obtained graph is bipartite. This
implies that distances in $G$ between pairs of clients or between pairs of
facilities are even.

\paragraph{The non-conservative capacitated fault-tolerant $k$-supplier.}

We first consider the case that capacities are non-uniform. A slightly different
formulation from $\ILP$ is used: the main difference is that we only have
variables $y_u$ for elements $u$ of $\mathcal{F}$, and we only consider
constraints corresponding to subsets of clients $U \subseteq \mathcal{C}$ and
failure scenarios $F \subseteq \mathcal{F}$.

By adapting the example of Section~\ref{subsec-iniLP}, the obtained formulation
also has unbounded integrality gap, and thus we consider a relaxation based on a
modified graph that depends on a clustering. In this step, rather than using the
clustering by Khuller and Sussmann~\cite{KhullerS00}, we greedily pick clients
whose distance to previously picked elements is exactly $4$. This set of
elements $\Gamma$ (midpoints) induces a clustering of $\mathcal{F}$, and a
corresponding tree of midpoints such that any adjacent midpoints in the tree are
at distance $4$, and every facility is associated to a midpoint at distance at
most $3$.

As in the case of the $k$-center, we select a set $B_v$ of $\alpha$ facilities
of largest capacity in each cluster centered at $v \in \Gamma$, and construct a
graph $G'$ by adding arcs from any client at distance $2$ from a midpoint $v$ to
each facility of $B_v$. Let $B$ be the union of all $B_v$, for $v \in \Gamma$.
The obtained LP relaxation is:
\begin{equation*}
\begin{array}{r@{$\,\,$}l@{\qquad}l}
\sum_{u\in \mathcal{F}}y_u  =  &  k   &  \\
| U | \le & \sum_{u \in N_{G'}(U) \setminus F} y_u L_u
          & \forall \; U \subseteq \mathcal{C}, \; F \subseteq B : |F| = \alpha\\
    1 \le & \sum_{u \in N_{G}(v) \setminus B} y_u
          & \forall \; v \in \Gamma\\
    y_u = & 1
          & \forall \; u \in B\\
  0 \le  y_u \le & 1
          & \forall \; u \in V.
\end{array}
\end{equation*}

As done in~\cite{AnBCGMS15}, a rounding algorithm similar to that for the
$k$-center can obtain an integral distance-$10$ transfer of a solution for the
previous linear program (the only difference is that a distance-$2$ transfer on
the tree of midpoints is now interpreted as a distance-$2\cdot 4$ transfer on
the original graph). This transfer implies that, for a failure scenario
${F\subseteq B}$, one may obtain an assignment $\phi$ such that $d(u, \phi(u))
\le 11$ for every $u \in \mathcal{C}$. Moreover, by using the same reasoning as
in the proof of Lemma~\ref{lemma:assign}, one may show that $d(u,
\delta(\phi(u))) \le 10$, where $\delta(\phi(u))$ is the midpoint associated
with $\phi(u)$. Therefore, a distance-$11$ assignment for a failure scenario
$F'\subseteq B$ can be transformed into a distance-$13$ assignment for a general
failure scenario $F\subseteq \mathcal{F}$.

For the uniformly capacitated case, we can also obtain a simplified relaxation
as in Section~\ref{sec:aprox0L}. It is straightforward to adapt the rounding
algorithm for the $\{0,L\}$-capacitated $k$-center by An~\etal~\cite{AnBCGMS15},
and obtain an integral distance-$6$ transfer for the solution for this
relaxation. The reason that the algorithm obtains a distance-$6$ transfer for
the $k$-supplier, rather than a distance-$5$ transfer, is that cluster midpoints
are at distance~$4$ in an instance of the $k$-supplier, whist midpoints are at
distance $3$ in an instance of the $k$-center.
Now, repeating the arguments in the proof of Theorem~\ref{thm:approx-uniform},
we obtain a $7$-approximation for the uniformly capacitated fault-tolerant
$k$-supplier.

\paragraph{The conservative capacitated fault-tolerant $k$-supplier.}

First, we revisit the notion of independent sets for the $k$-supplier. A set of
\emph{facilities} $W$ is $(\alpha,\ell)$-independent if each connected component
of $G^\ell[W]$ contains at most $\alpha$ vertices. Also, a set of \emph{clients}
$A$ is $8$-independent if $d(u,v) \ge 8$ for every $u,v \in A$ (notice that, in
this bipartite setting, requiring that a set of clients is $7$-independent is
the same as requiring that it is $8$-independent). With these adapted
definitions, one may obtain versions of Lemmas~\ref{lem:alpha-independent}
and~\ref{lem:7independent} with analogous statements.

For the uniformly capacitated case, we use Algorithm~\ref{alg:conservative-0L},
but with an $8$-independent set $A$, and assuming that \alg is a
$\beta$-approximation for the capacitated $k$-supplier problem. Notice that
since $A$ is maximal, for every client $u$, there is a client $v \in A$, such
that $d(u, v) \le 6$. Now, by repeating Theorem~\ref{thm:conservative0L}, we
obtain that this algorithm has approximation factor $\max\{7, \beta\}$. We use
the algorithm by An~\etal for the uniformly capacitated case (without failures),
for which, as stated above, $\beta = 7$, and obtain a $7$-approximation.

For the non-uniformly capacitated case, we use Algorithm~\ref{alg:conservative}.
However, when augmenting the set of backup facilities $B$ with a set of
facilities $U$ (Line~\ref{alg:augment-B}), rather than excluding elements in
$N^6(U) \cap B$, we exclude the elements in $N^8(U) \cap B$.
Recall that, in the $k$-center problem, we obtain a $7$-independent set $A
\subseteq B$ by selecting an element in each connected component of $G^6[B]$
(see the proof of Lemma~\ref{lem:preopenBgeral}).
In the the case of the $k$-supplier problem, to obtain an $8$-independent set
$A$ of clients, we must choose from the neighborhood of the set $B$ of backup
facilities (and not directly from~$B$).
Thus, for each connected component $C_i$ of $G^8[B]$, we choose a facility $b_i
\in C_i$, and a neighbor $a_i \in N(b_i) \in \mathcal{C}$. Notice that, for any
pair $b_i, b_j$, $d(b_i, b_j) > 8$, thus $d(b_i, b_j) \ge 10$, and hence $d(a_i,
a_j) \ge 8$.
Therefore, the set $A$ of all $a_i$'s is an $8$-independent set. The rest of the
proof remains unchanged, except that we replace~$6$ by~$8$, obtaining a factor
${\beta + 8 \alpha}$.
The best known approximation for the capacitated $k$-supplier has factor $\beta
=  11$~\cite{AnBCGMS15}.

\section{Complexity results}
\label{sec-complex}

The following theorem shows that the subproblem solved by
Algorithm~\ref{alg:conservative} is \coNP-complete when $\alpha$ is part of the
input.

\begin{theorem}\label{thm:coNP1}
The problem of, given a graph $H=(V_H,E_H)$, a number $L_u$ for each $u \in
V_H$, a set $B \subseteq V_H$, and a number $\alpha$, deciding whether ${L(U)
\le L(B \cap N(U))}$ for every $U \subseteq V_H$ with $|U| = \alpha$ is
\coNP-complete.
\end{theorem}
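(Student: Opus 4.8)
The plan is to prove the two directions separately. Membership in \coNP\ is immediate: a NO instance (one for which the universal statement fails) is certified by a single set $U$ with $|U|=\alpha$ and $L(U)>L(B\cap N(U))$, and given such a $U$ one checks $|U|=\alpha$ and compares the two sums in polynomial time. Hence the complementary problem---does there exist a \emph{violating} set $U$?---is in \NP, so the stated problem is in \coNP. It remains to prove \coNP-hardness, which I will do by a polynomial reduction from \textsc{Clique} (well known to be \NP-complete) to the violating-set-existence problem; equivalently, the complement of \textsc{Clique} reduces to the problem in the statement.

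Given a graph $G=(V,E)$ and a target clique size $k$, I build an instance $(H,L,B,\alpha)$ with $\alpha=k$ as follows. For each $v\in V$ I create a \emph{choice} vertex $\hat v\notin B$; for each edge $e=\{u,w\}\in E$ I create a \emph{penalty} vertex $p_e\in B$ adjacent to $\hat u$ and $\hat w$; and to each $p_e$ I attach a private \emph{guard} $g_e\in B$ adjacent only to $p_e$. The capacities are tuned so that, for a choice-only set $U=\{\hat v:v\in S\}$, one has $L(U)-L(B\cap N(U))=D\cdot e(S)-\pi\,|S|$, where $e(S)$ is the number of edges inside $S$, by taking $L_{p_e}=D:=k$ and offsetting the incident-edge penalty in the choice capacities (either by working on a regular $G$ with uniform choice capacity, or, for general $G$, by giving $\hat v$ capacity proportional to $\deg(v)$ together with one extra guarded private-penalty vertex per $v$). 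I set $\pi:=\binom{k}{2}-1$, and the guard capacities $L_{g_e}$ astronomically large, exceeding the total capacity of all non-guard vertices.

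The clean part is the choice-only computation. From $e(S)\le\binom{|S|}{2}$ one checks that $D\,e(S)-\pi|S|\le |S|\!\left(\tfrac{k}{2}(|S|-k)+1\right)\le 0$ whenever $|S|<k$, while for $|S|=k$ the value is positive exactly when $e(S)=\binom{k}{2}$, i.e.\ when $S$ is a clique (a $k$-clique gives value $k>0$). This yields one direction. For the converse I must show that no size-$k$ set $U$ can violate the inequality unless $G$ has a $k$-clique: the guards make including any $p_e$ in $U$ catastrophic (it drags $g_e$ into $B\cap N(U)$, forcing the penalty above $L(U)$), while the guard vertices and private-penalty vertices, when placed in $U$, behave as net-nonpositive ``fillers'' (their own capacity cancels between $L(U)$ and $L(B\cap N(U))$, and they trigger at most harmless or already-counted neighbors). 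Consequently the maximum of $L(U)-L(B\cap N(U))$ over size-$k$ sets is, in value, attained by a choice-only set, which is positive iff a $k$-clique exists.

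The crux---and the step I expect to be delicate---is exactly this converse bookkeeping, which is also why the gadget looks elaborate. Because the paper's $N(U)=N^{1}(U)$ is a \emph{closed} neighborhood, any $B$-vertex placed in $U$ automatically lies in $B\cap N(U)$, so its capacity cancels (\emph{absorption}); worse, $|U|=\alpha$ is an \emph{exact} cardinality, which tempts one to \emph{pad} $U$ up to size $\alpha$ with harmless $B$-vertices and neutralize penalties for free. A naive reduction is destroyed by these two phenomena. The guards defeat absorption of the penalty vertices, and the density threshold $\pi=\binom{k}{2}-1$---which forces every choice set of size less than $k$, hence every configuration wasting slots on fillers, to have nonpositive value---defeats padding. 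Carefully verifying that no combination of absorption and padding produces a spurious violator is the technical heart of the argument.
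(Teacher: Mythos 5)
Your proof is correct, and its skeleton is the same as the paper's: both reduce from \textsc{Clique} with $\alpha=k$, both exploit the identity $\sum_{v\in S}\deg_G(v)=(\text{edges incident to }S)+(\text{edges inside }S)$ so that $L(U)$ minus the edge-capacity part of $L(B\cap N(U))$ counts the edges spanned by $S$, and both calibrate the remaining slack to $\binom{k}{2}-1$ so that a violation occurs exactly when $S$ spans $\binom{k}{2}$ edges. Where you genuinely diverge is in the anti-padding/anti-absorption machinery, which is the real content of the converse direction. The paper keeps capacities unscaled ($L(v)=d_v$, $L(e)=1$), concentrates the whole threshold in a single portal vertex $s$ of capacity $\binom{k}{2}-1$, and defeats padding by attaching two $(k+1)$-cliques $C_V$ and $C_E$: since $|U|=k<k+1$, any part of $U$ inside a clique is absorbed by the non-portal clique vertices, and $L(U\cap E)\le|E|=L(t)$ absorbs edge-vertices placed in $U$. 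You instead distribute the threshold as $\pi=\binom{k}{2}-1$ per chosen vertex, which forces the scaling $D=k$ on the edge-penalty capacities and the per-vertex private penalties to handle non-regular $G$, and you defeat absorption with private guards of huge capacity. Both mechanisms work; I checked your filler analysis: because $N$ is closed, any $B$-vertex placed in $U$ self-cancels and only adds to the right-hand side (its guard, or the guarded penalty it is attached to), so every size-$k$ set has value at most $De(S)-\pi|S|$ with $|S|\le k$, positive only for a choice-only $k$-clique. Two things you should still do to make this a complete proof: pin down the ``offsetting'' step explicitly (e.g.\ $L_{\hat v}=k\deg_G(v)$, a private penalty $q_v\in B$ of capacity $\binom{k}{2}-1$ adjacent only to $\hat v$, each penalty vertex carrying its own guard), and write out the converse case analysis you correctly flag as the technical heart --- your sketch identifies the right mechanisms, but as written it asserts rather than verifies that no mixed set beats the choice-only bound. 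The paper's gadget is somewhat leaner (no capacity scaling, all capacities at most $|E|$); yours is more modular, with each gadget private to one vertex or edge.
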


\begin{proof}
  This problem is in \coNP, because, for an instance $(H,L,B,\alpha)$ whose
  answer to the problem is no, that is, a {\sc no} instance, one can present
  as a {\sc no} certificate a set $U \subseteq V_H$ such that $|U|=\alpha$ and
  $L(U) > L(B \cap N(U))$.

  The clique problem is known to be \NP-complete~\cite{Karp72} and consists
  in, given a graph $G$ and a positive integer~$k$, to decide whether there
  exists a clique in $G$ with at least~$k$ vertices.  We present a reduction
  from the clique problem to our problem so that an instance $(G,k)$ of the
  clique problem is a {\sc yes} instance if and only if the corresponding
  instance $(H,L,B,\alpha)$ for our problem is a {\sc no} instance.

  Let $(G,k)$ be an instance of the clique problem with $G=(V,E)$. The main
  part of the graph $H$ consists of the bipartite graph with bipartition
  $\{V,E\}$ and a vertex $v$ in $V$ adjacent to an edge $e$ in~$E$ if $v$ is
  an end of $e$ in $G$.  Besides this, graph~$H$ has two disjoint cliques
  on~$k+1$ vertices, say~$C_V$ and $C_E$. A vertex in~$C_V$, say~$s$, is
  adjacent to each vertex in~$V$ while a vertex in $C_E$, say~$t$, is adjacent
  to each edge in $E$.  This finishes the description of graph $H$.  See
  Figure~\ref{fig:reduction1} for an example. The capacity function~$L$ is
  defined as follows. For each~$e$ in~$E$, let $L(e)=1$; for each $v$ in~$V$,
  let $L(v)$ be the degree of~$v$ in $G$, denoted as~$d_v$; for each $u$ in
  $C_V$, let $L(u)=\binom{k}{2}-1$ and, for each $u$ in~$C_E$, let $L(u)=|E|$.
  Finally, let $B = E \cup C_V \cup C_E$ and $\alpha = k$.  This concludes the
  description of the instance of our problem, which can be obtained from
  $(G,k)$ in time polynomial in the size of $(G,k)$.  Next we show
  that~$(G,k)$ is a {\sc yes} instance for the clique problem if and only
  if~$(H,L,B,\alpha)$ is a {\sc no} instance of our problem.

  \begin{figure}[b]
    \centering
    \scalebox{1.2}{\def\svgwidth{9.35cm}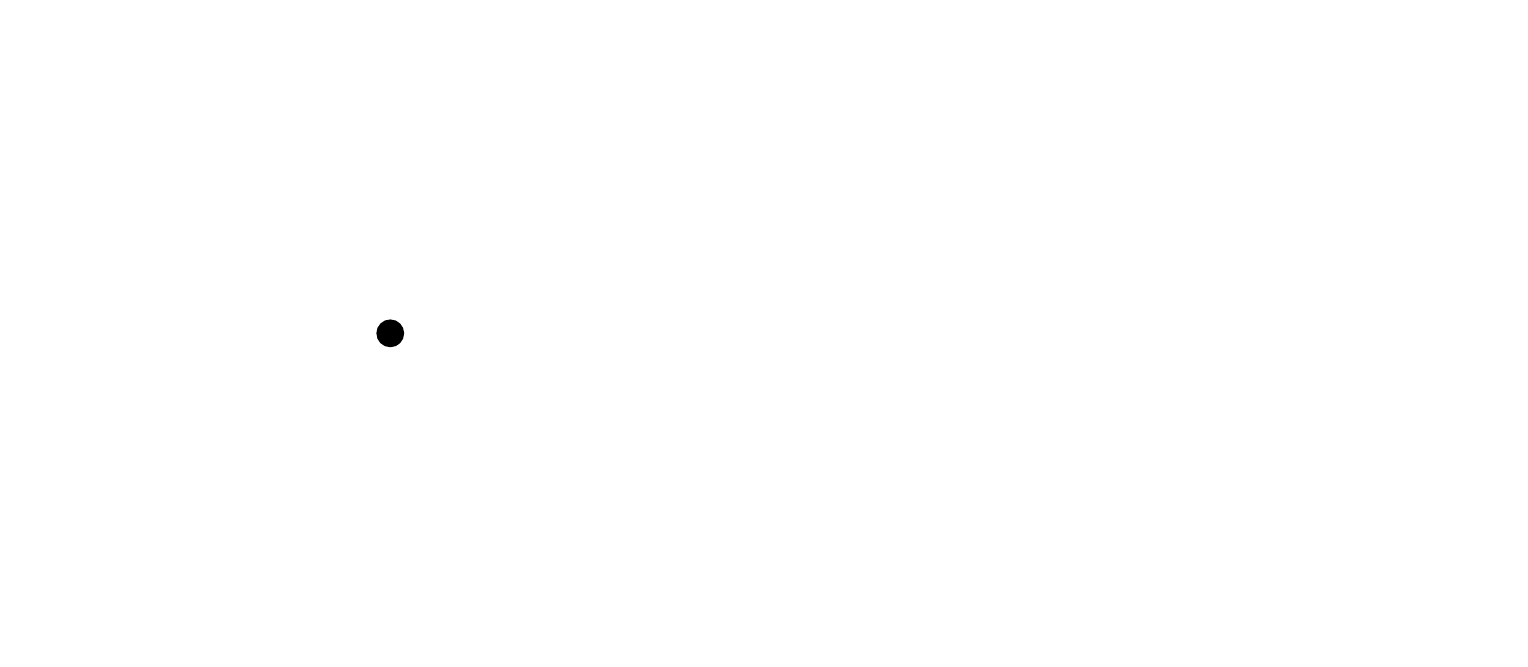}
    \caption{The graph on the right is the graph $H$ in the instance of our
    problem corresponding to the graph $G$ on the left for $k=3$. Squared
    vertices indicate the set $B$.\label{fig:reduction1}}
  \end{figure}

  First let us prove that, if there exists a clique $S$ of size $k$ in $G$,
  then ${L(S) > L(B \cap N(S))}$, that is, the answer of our problem for the
  instance $(H,L,B,\alpha)$ is {\sc no}.  Indeed, $B \cap N(S)$ consists of
  the special vertex $s$ in~$C_V$ and the edges incident to $S$ in $G$, so
  $L(B \cap N(S)) = \binom{k}{2} - 1 + \ell$, where~$\ell$ is the number of
  edges in $G$ incident to $S$.  The value of $L(S)$ is $\sum_{v \in S} d_v$,
  which is exactly the number of edges incident to $S$ plus the number of
  edges in $G$ with both ends in $S$, that is, the edges in the graph $G[S]$
  induced by $S$.  As the number of edges in $G[S]$ is exactly $\binom{k}{2}$
  because $S$ is a clique on $k$ vertices, $L(S) = \binom{k}{2} + \ell > L(B
  \cap N(S))$, as we wished.

  Second we prove that, if ${L(U) > L(B \cap N(U))}$ for a set $U$ of $k$
  vertices of~$H$, then there is a clique with $k$ vertices in $G$.  We start
  by arguing that ${L(U \cap (C_V \cup C_E))} \leq {L((B \cap N(U) \cap (C_V
  \cup C_E)) \setminus \{s,t\})}$. If~${U \cap C_V \neq \emptyset}$, then~$B
  \cap N(U) \supseteq C_V$. Moreover, $U \neq C_V$ since $C_V$ has $k+1$
  vertices and~$U$ has $k$ vertices. This means that ${L(U \cap C_V)} \leq
  {L((B \cap N(U) \cap C_V) \setminus \{s\})}$. Similarly, if $U \cap C_E \neq
  \emptyset$, then $B \cap N(U) \supseteq C_E$. Again $U \neq C_E$, so we have
  that~${L(U \cap C_E)} \leq {L((B \cap N(U) \cap C_E) \setminus \{t\})}$,
  completing the proof of the claimed inequality. Now note that $L(U \cap E)
  \leq |E| = L(t)$.  On the other hand, let~$S = U \cap V$ and note that $L(S)
  = \sum_{v \in S} d_v$, which is exactly the number of edges incident to $S$
  plus the number of edges in the graph $G[S]$.  If $S$ is not a clique on~$k$
  vertices, then $L(S) \leq L(E \cap N(U)) + L(s)$ and, joining everything, we
  deduce that $L(U) \leq L(B \cap N(U))$, a contradiction.  So $S$ must be a
  clique on~$k$ vertices in~$G$.
\end{proof}

Analogously, the following theorem shows that the separation problem for program
$\LP$ is $\coNP$-hard when $\alpha$ is part of the input. Thus, to achieve a
constant approximation for capacitated fault-tolerant $k$-center with general
capacities and $\alpha$ as part of input, one needs a different strategy. Notice
that this is equivalent to the problem of deciding whether a subset of $S
\subseteq V$ is a distance-$1$ solution for the capacitated fault-tolerant
$k$-center.

\begin{theorem}\label{thm:coNP2}
The problem of, given a graph $H=(V_H,E_H)$, a number $L_u$ for each $u \in
V_H$, and a number $\alpha$, deciding whether $L(N(U) \setminus F) \ge |U|$ for
every $U \subseteq V_H$ and $F \subseteq V_H$ with $|F|=\alpha$ is
\coNP-complete.
\end{theorem}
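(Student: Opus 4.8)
The plan is to follow the template of the proof of Theorem~\ref{thm:coNP1}. Membership in \coNP is immediate: for a \textsc{no} instance one exhibits a pair $(U,F)$ with $|F|=\alpha$ and $L(N(U)\setminus F)<|U|$, which is checkable in polynomial time. For hardness I would again reduce from the clique problem~\cite{Karp72}, building $H$ so that $(G,k)$ has a clique of size $k$ if and only if the constructed instance is a \textsc{no} instance, and setting $\alpha=k$. The combinatorial fact driving the reduction is that any $\binom{k}{2}$ edges span at least $k$ vertices, with equality exactly when they are the edges of a $k$-clique.

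Concretely, I would start from the incidence bipartite graph between $V$ and $E$: a client vertex $\hat e$ for each edge (with capacity $0$) and a server vertex $\tilde v$ for each vertex (with large capacity $|E|$), joined by $\hat e\sim\tilde v$ whenever $v\in e$. To realise the threshold I add $k+1$ backup vertices $b_1,\dots,b_{k+1}$, each adjacent to every client, whose capacities are as balanced as possible and sum to $\binom{k}{2}-1$; spreading the backup over $k+1$ vertices guarantees that a failure set of size $\alpha=k$ cannot destroy all of it. For the clique witness $U=\{\hat e:e\in E(S)\}$ one has $|U|=\binom{k}{2}$, and the only positive-capacity neighbours are the $k$ endpoint-servers $\tilde S$ together with the backups; failing exactly $\tilde S$ leaves surviving capacity $\binom{k}{2}-1<|U|$, so the instance is \textsc{no}. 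This is the completeness direction (instances with $|E|<\binom{k}{2}$ admit no $k$-clique and are mapped to a trivially feasible instance).

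The delicate part, and the step I expect to be the main obstacle, is soundness: ruling out \emph{every} spurious violating pair. Unlike Theorem~\ref{thm:coNP1}, where the quantifier fixes $|U|=\alpha$, here $U$ ranges over all of $V_H$, so two families of degenerate sets must be neutralised. First, a client set whose endpoint set $W$ has $|W|<k$ has a small neighbourhood; this is exactly where the backup-spreading is needed, and the balanced capacities are chosen so that, for $|W|=k-j$, deleting the $k-j$ endpoint-servers together with the $j$ largest backups still leaves surviving capacity at least $\binom{k-j}{2}\ge|U|$ — the inequality $(\text{top } j \text{ backups})\le \tfrac{j(2k-j-1)}{2}-1$ is what pins down the precise backup capacities. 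Second, any $U$ meeting the server, backup, or protector part of $V_H$ must be prevented from violating; to this end I would add protector vertices $g_0,\dots,g_k$ of huge capacity, each adjacent to all servers and all backups, so that whenever $U$ touches the non-client part its neighbourhood contains all $k+1$ protectors, at least one of which survives any $\alpha$ failures. With these two gadgets in place, only pure client sets can violate, and the density fact then forces such a set to be the edge set of a $k$-clique; the remaining verification is a routine case analysis on $|W|$, paralleling the capacity bookkeeping in the proof of Theorem~\ref{thm:coNP1}.
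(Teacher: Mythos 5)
Your reduction is essentially the paper's: both go from the clique problem via the vertex--edge incidence graph, take the edges of $G$ as the zero-capacity ``client'' side, set $\alpha=k$, give the endpoint side enough capacity that a violation forces the adversary to spend all $k$ failures on exactly the $k$ endpoints of a clique, and park a reserve of total capacity $\binom{k}{2}-1$ on more than $\alpha$ vertices adjacent to every edge-vertex so that it can never be wiped out together with the endpoints. The differences are only in the bookkeeping. The paper gives each vertex-server capacity $d_v$ and uses exactly $k$ reserve vertices $a_1,\dots,a_k$ with graded capacities whose partial sums are $\binom{\ell}{2}$, so that when $\ell$ endpoints are failed the $\ell$ surviving reserve vertices exactly pay for the at most $\binom{\ell}{2}$ edges of $U$ buried inside the failed set; you instead give every server the uniform capacity $|E|$ (so any surviving server settles the matter at once) and balance the reserve over $k+1$ vertices, shifting the burden onto the partial-sum inequality $\mathrm{top}_j\le\tfrac{j(2k-j-1)}{2}-1$ that you state. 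That inequality does hold for the balanced split (in the binding case it reduces to $j(k-j)\ge 2$, true for $1\le j\le k-1$ and $k\ge 3$), so your accounting for client-only sets $U$ goes through and correctly isolates the $k$-clique as the unique violating configuration.

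The one place your sketch is not airtight is the guard gadget. Your protectors $g_0,\dots,g_k$ are adjacent to all servers and all backups, so they neutralise any $U$ that touches a server or a backup; but nothing guarantees a surviving huge-capacity vertex in $N(U)$ when $U$ itself contains a protector: the adversary can fail the single protector lying in $N(U)$, after which only servers of capacity $|E|$ remain available, while $|U|$ can exceed $|E|$ once $U$ mixes protectors with all the clients. The paper sidesteps this by making each guard set a clique on more than $\alpha$ vertices ($C_V$, and $B_E$ inside $C_E$): touching any one member puts the whole clique into $N(U)$, so some huge-capacity member always survives. Making your $g_i$ mutually adjacent (or arguing separately, with explicit lower bounds on $|V|$ and $|E|$ after discarding degenerate instances) closes the hole; as written, the ``routine case analysis'' you defer does not quite cover this case.
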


\begin{proof}
  The proof is similar to that of Theorem~\ref{thm:coNP1}. It is easy to see
  that the problem is in \coNP as, for a {\sc no} instance of the problem, one
  can present as a certificate subsets~$U$ and $F$ of $V_H$ such that
  $|F|=\alpha$ and $L(N(U) \setminus F) < |U|$.

  Consider again the \NP-complete clique problem: given a graph $G$ and a
  positive integer~$k$, decide whether there exists a clique in $G$ with at
  least~$k$ vertices.  Next we present a reduction from the clique problem to
  our problem so that an instance~$(G,k)$ of the clique problem is a {\sc yes}
  instance if and only if the corresponding instance $(H,L,\alpha)$ for our
  problem is a {\sc no} instance.

  Let $(G,k)$ be an instance of the clique problem, where $G=(V,E)$. The main
  part of the graph $H$ consists of the bipartite graph with $V$ as one side
  and $E$ as the other side of the bipartition.  A vertex $v$ in $V$ is
  adjacent to an edge $e$ in~$E$ if~$v$ is an end of $e$ in $G$.  Besides
  this, $H$ has two disjoint cliques, say, $C_V$ on~$k+1$ vertices and $C_E =
  A_E \cup B_E$ on~$2k+1$ vertices, with $|A_E|=k$ and $|B_E|=k+1$. Every
  vertex in $C_V$ is adjacent to each vertex in~$V$ and every vertex in $A_E$
  is adjacent to each edge in~$E$. This finishes the description of graph~$H$.
  See Figure~\ref{fig:reduction2} for an example.  As for~$L$, for each~$e$
  in~$E$, let $L(e) = 0$; for each $v$ in~$V$, let $L(v) = d_v$, where $d_v$
  is the degree of~$v$ in~$G$; for each $u$ in $C_V \cup B_E$, let $L(u) =
  |V_H|$; denoting by $a_1,\ldots,a_k$ the vertices in~$A_E$, let $L(a_i) = i$
  for $i=1,\ldots,k-1$ and $L(a_k) = k-1$. Finally, let~$\alpha = k$,
  concluding the description of the instance of our problem, which can be
  obtained from $(G,k)$ in time polynomial in the size of $(G,k)$.  Next we
  show that~$(G,k)$ is a {\sc yes} instance for the clique problem if and only
  if~$(H,L,\alpha)$ is a {\sc no} instance of our problem.

  \begin{figure}
    \centering
    \scalebox{1.2}{\def\svgwidth{9.435cm}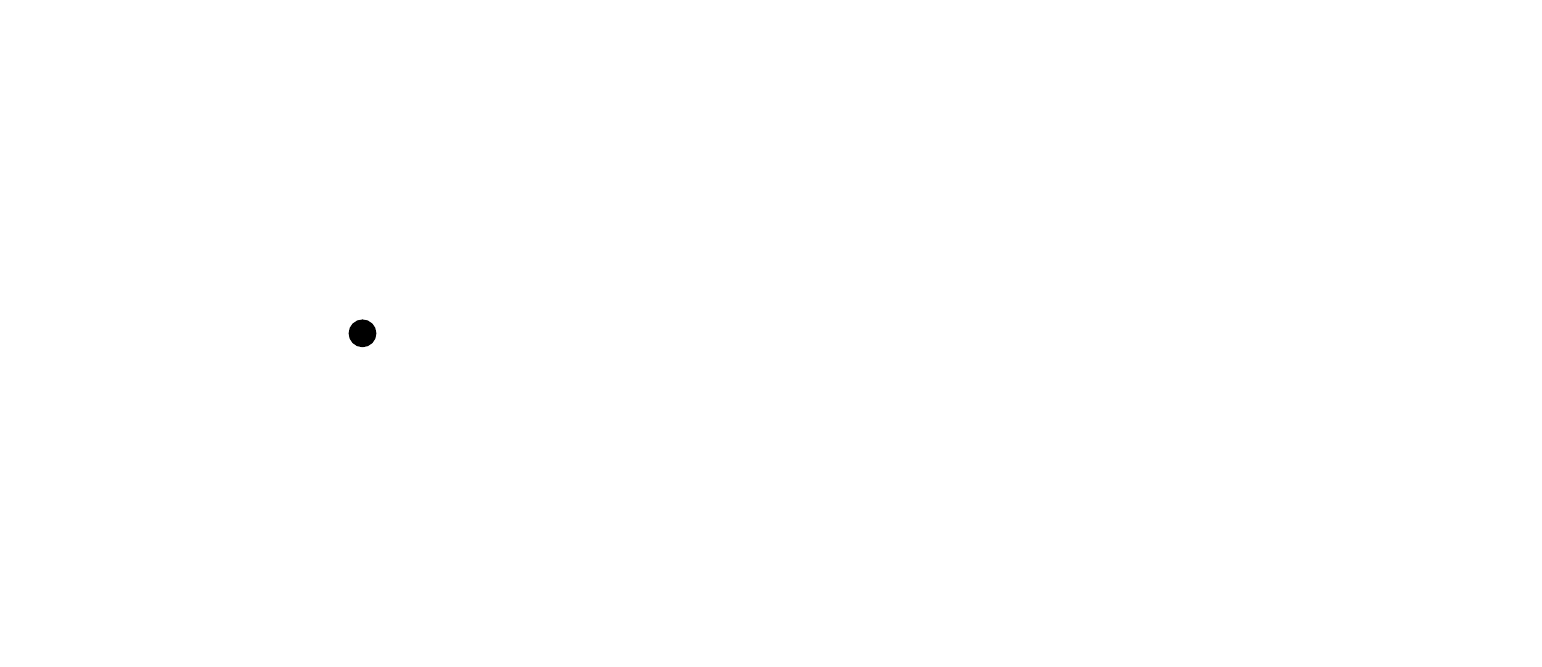}
    \caption{The graph on the right is the graph $H$ in the instance of our
    problem corresponding to the graph on the left for
    $k=3$.\label{fig:reduction2}}
  \end{figure}

  First, suppose that there exists a clique $S$ of size $k$ in $G$.  Let $U$
  be the edges in $G[S]$ and let $F = S$. Note that $|F| = |S| = k = \alpha$
  and $|U| = \binom{k}{2}$, because $S$ is a clique on $k$ vertices.  Thus
  $N(U) \setminus F = (S \cup A_E) \setminus F = A_E$, and $L(N(U) \setminus
  F) = L(A_E) = \binom{k}{2} - 1 < |U|$. Hence the answer of our problem for
  the instance~$(H,L,\alpha)$ is {\sc no}.

  Second, suppose that there are subsets $U$ and $F$ of $V_H$ such that $|F| =
  \alpha$ and $L(N(U) \setminus F) < |U|$.  Observe that $U \cap (V \cup C_V)
  = \emptyset$, otherwise $N(U) \supseteq C_V$ and~$C_V \setminus F \neq
  \emptyset$ because~$F$ has $k$ vertices and $C_V$ has $k+1$ vertices.  But
  this would mean that~$L(N(U) \setminus F) \ge |V_H| \ge |U|$, a
  contradiction.  Similarly $U \cap C_E = \emptyset$, otherwise $N(U)
  \supseteq B_E$ and, as $B_E$ has $k+1$ vertices, $L(N(U) \setminus F) \ge
  |V_H| \ge |U|$, a contradiction. So we know that~$U \subseteq E$.  Now let
  $S = N(U) \cap V$ and let $\ell = |S \cap F|$. Thus $N(U)$ has at least
  $\ell$ vertices in $A_E \setminus F$, and then $L(N(U) \setminus F) \ge
  L(\{a_1,\ldots,a_\ell\}) + L(S \setminus F)$. Notice that
  $L(\{a_1,\ldots,a_\ell\}) = \binom{\ell}{2}$, if~$\ell < k$, and
  $L(\{a_1,\ldots,a_k\}) =  \binom{k}{2}-1$. On the one hand, the number of
  edges in~$G[S \cap F]$ is at most~$\binom{\ell}{2}$, because $|S \cap F| =
  \ell$.  On the other hand, $L(S \setminus F) = \sum_{u \in S \setminus F}
  d_u$, which is the number of edges incident to~$S \setminus F$ plus the
  number of edges in~$G[S \setminus F]$. Thus $L(N(U) \setminus F) \ge |U|$
  unless $\ell = k$ and~$G[S \cap F]$ is a clique on~$k$ vertices in~$G$.
  Hence, as $L(N(U) \setminus F) < |U|$, there is a clique in~$G$ on $k$
  vertices.
\end{proof}

\bibliographystyle{abbrv}
\bibliography{kcenterarxiv}

\end{document}